\newcommand{\cites}[1]{\citeauthor{#1}'s (\citeyear{#1})}
\theoremstyle{definition}
\newtheorem{example}{Example}
\newtheorem{theorem}{Theorem}
\newtheorem{corollary}{Corollary}
\def\figwidth{12cm}
\def\figheight{7.5cm}
\def\moreEmpiricalContentK{ 2}
\def\moreEmpiricalContentJ{ 3}
\def\moreEmpiricalContentMinRhs{-0.1278}
\def\moreEmpiricalContentPreciseLhs{0.0800}
\def\moreEmpiricalContentPreciseBetaMT{  0.72}
\def\moreEmpiricalContentLhs{0.0800}
\def\moreEmpiricalContentBetaMT{  0.72}
\def\moreEmpiricalContentQiRowOne{\left[\begin{array}{ccc}  0.25&  0.25&  0.50\end{array}\right]}
\def\moreEmpiricalContentQiRowTwo{\left[\begin{array}{ccc}  0.00&  0.25&  0.75\end{array}\right]}
\def\moreEmpiricalContentQK{\left[\begin{array}{ccc}  0.90&  0.00&  0.10\\  0.00&  0.90&  0.10\\  0.00&  1.00&  0.00\end{array}\right]}
\def\moreEmpiricalContentPOne{\left[\begin{array}{ccc}  0.50\\  0.48\\  0.10\end{array}\right]}
\def\moreEmpiricalContentPK{\left[\begin{array}{ccc}  0.50\\  0.52\\  0.90\end{array}\right]}
\def\moreEmpiricalContentMPrime{\left[\begin{array}{ccc}  0.69&  0.65&  0.11\end{array}\right]}
\def\moreEmpiricalContentDeltaQ{\left[\begin{array}{ccc} -0.65&  0.90& -0.25\end{array}\right]}
\def\moreEmpiricalContentDeltaQM{0.1116}
\def\needNoMTK{ 2}
\def\needNoMTJ{ 3}
\def\needNoMTPreciseLhs{0.0800}
\def\needNoMTPreciseBetaOne{  0.90}
\def\needNoMTLhs{0.0800}
\def\needNoMTBetaOne{  0.90}
\def\needNoMTQiRowOne{\left[\begin{array}{ccc}  0.00&  0.25&  0.75\end{array}\right]}
\def\needNoMTQiRowTwo{\left[\begin{array}{ccc}  0.25&  0.25&  0.50\end{array}\right]}
\def\needNoMTQK{\left[\begin{array}{ccc}  0.00&  1.00&  0.00\\  0.00&  1.00&  0.00\\  0.00&  0.00&  1.00\end{array}\right]}
\def\needNoMTPOne{\left[\begin{array}{ccc}  0.50\\  0.48\\  0.50\end{array}\right]}
\def\needNoMTPK{\left[\begin{array}{ccc}  0.50\\  0.52\\  0.50\end{array}\right]}
\def\needNoMTMPrime{\left[\begin{array}{ccc}  0.69&  0.65&  0.69\end{array}\right]}
\def\needNoMTDeltaQ{\left[\begin{array}{ccc} -0.25&  0.00&  0.25\end{array}\right]}
\def\needNoMTDeltaQM{0.0000}
\def\identificationFailsK{ 2}
\def\identificationFailsJ{ 3}
\def\identificationFailsPreciseLhs{0.0400}
\def\identificationFailsPreciseBetaOne{  0.34}
\def\identificationFailsPreciseBetaTwo{  0.95}
\def\identificationFailsPreciseBetaMT{  0.31}
\def\identificationFailsLhs{0.0400}
\def\identificationFailsBetaOne{  0.34}
\def\identificationFailsBetaTwo{  0.95}
\def\identificationFailsBetaMT{  0.31}
\def\identificationFailsQiRowOne{\left[\begin{array}{ccc}  0.25&  0.25&  0.50\end{array}\right]}
\def\identificationFailsQiRowTwo{\left[\begin{array}{ccc}  0.00&  0.25&  0.75\end{array}\right]}
\def\identificationFailsQK{\left[\begin{array}{ccc}  0.90&  0.00&  0.10\\  0.00&  0.90&  0.10\\  0.00&  1.00&  0.00\end{array}\right]}
\def\identificationFailsPOne{\left[\begin{array}{ccc}  0.50\\  0.49\\  0.10\end{array}\right]}
\def\identificationFailsPK{\left[\begin{array}{ccc}  0.50\\  0.51\\  0.90\end{array}\right]}
\def\identificationFailsMPrime{\left[\begin{array}{ccc}  0.69&  0.67&  0.11\end{array}\right]}
\def\identificationFailsDeltaQ{\left[\begin{array}{ccc} -0.65&  0.90& -0.25\end{array}\right]}
\def\identificationFailsDeltaQM{0.1291}
\def\twiceTwoK{ 2}
\def\twiceTwoJ{ 4}
\def\twiceTwoOnePreciseLhs{0.0068}
\def\twiceTwoTwoPreciseLhs{0.0019}
\def\twiceTwoOnePreciseBetaOne{  0.17}
\def\twiceTwoTwoPreciseBetaOne{  0.07}
\def\twiceTwoOnePreciseBetaTwo{  0.90}
\def\twiceTwoOneLhs{0.0068}
\def\twiceTwoTwoLhs{0.0019}
\def\twiceTwoOneBetaOne{  0.17}
\def\twiceTwoTwoBetaOne{  0.07}
\def\twiceTwoOneBetaTwo{  0.90}
\def\twiceTwoTwoBetaTwo{  0.90}
\def\twiceTwoQi{\left[\begin{array}{cccc}  0.43&  0.26&  0.18&  0.18\\  0.33&  0.29&  0.36&  0.27\\  0.19&  0.26&  0.18&  0.45\\  0.05&  0.18&  0.29&  0.09\end{array}\right]}
\def\twiceTwoQK{\left[\begin{array}{cccc}  0.17&  0.26&  0.13&  0.43\\  0.13&  0.07&  0.20&  0.60\\  0.20&  0.30&  0.10&  0.40\\  0.25&  0.15&  0.50&  0.10\end{array}\right]}
\def\twiceTwoPOne{\left[\begin{array}{cccc}  0.92&  0.92&  0.63&  0.63\end{array}\right]}
\def\twiceTwoPK{\left[\begin{array}{cccc}  0.08&  0.08&  0.37&  0.37\end{array}\right]}
\def\twiceTwoNoiseDistScaler{ 4}
\def\twiceTwoNoiseOnePreciseLhs{0.0066}
\def\twiceTwoNoiseTwoPreciseLhs{0.0050}
\def\twiceTwoNoiseOnePreciseBetaOne{  0.16}
\def\twiceTwoNoiseTwoPreciseBetaOne{  0.25}
\def\twiceTwoNoiseOnePreciseBetaTwo{  0.91}
\def\twiceTwoNoiseTwoPreciseBetaTwo{  0.68}
\def\twiceTwoNoiseOneLhs{0.0066}
\def\twiceTwoNoiseTwoLhs{0.0050}
\def\twiceTwoNoiseOneBetaOne{  0.16}
\def\twiceTwoNoiseTwoBetaOne{  0.25}
\def\twiceTwoNoiseOneBetaTwo{  0.91}
\def\twiceTwoNoiseTwoBetaTwo{  0.68}
\def\twiceTwoNoisePreciseCriticalValue{0.1000}
\def\twiceTwoNoisePreciseBetaSetOne{  0.10}
\def\twiceTwoNoisePreciseBetaSetTwo{  0.28}
\def\twiceTwoNoisePreciseBetaSetThree{  0.79}
\def\twiceTwoNoisePreciseBetaSetFour{  0.91}
\def\twiceTwoNoiseCriticalValue{  0.10}
\def\twiceTwoNoiseBetaSetOne{  0.10}
\def\twiceTwoNoiseBetaSetTwo{  0.28}
\def\twiceTwoNoiseBetaSetThree{  0.79}
\def\twiceTwoNoiseBetaSetFour{  0.91}
\def\oneRedundantK{ 2}
\def\oneRedundantJ{ 4}
\def\oneRedundantOnePreciseLhs{0.0187}
\def\oneRedundantTwoPreciseLhs{0.0045}
\def\oneRedundantOnePreciseBetaOne{  0.30}
\def\oneRedundantTwoPreciseBetaTwo{  0.65}
\def\oneRedundantOneLhs{0.0187}
\def\oneRedundantTwoLhs{0.0045}
\def\oneRedundantOneBetaOne{  0.30}
\def\oneRedundantTwoBetaTwo{  0.65}
\def\oneRedundantQi{\left[\begin{array}{cccc}  0.40&  0.26&  0.18&  0.18\\  0.33&  0.29&  0.36&  0.27\\  0.19&  0.26&  0.18&  0.45\\  0.08&  0.18&  0.29&  0.09\end{array}\right]}
\def\oneRedundantQK{\left[\begin{array}{cccc}  0.17&  0.26&  0.13&  0.43\\  0.13&  0.07&  0.20&  0.60\\  0.20&  0.30&  0.10&  0.40\\  0.25&  0.15&  0.50&  0.10\end{array}\right]}
\def\oneRedundantPOne{\left[\begin{array}{cccc}  0.60&  0.59&  0.88&  0.88\end{array}\right]}
\def\oneRedundantPK{\left[\begin{array}{cccc}  0.40&  0.41&  0.12&  0.12\end{array}\right]}
\def\experienceK{ 2}
\def\experienceJ{ 3}
\def\experiencePreciseLhs{0.4918}
\def\experiencePreciseBetaOne{  0.80}
\def\experienceLhs{0.4918}
\def\experienceBetaOne{  0.80}
\def\experienceQiRowOne{\left[\begin{array}{ccc}  0.00&  0.25&  0.75\end{array}\right]}
\def\experienceQiRowTwo{\left[\begin{array}{ccc}  0.25&  0.75&  0.00\end{array}\right]}
\def\experienceQK{\left[\begin{array}{ccc}  1.00&  0.00&  0.00\\  0.50&  0.50&  0.00\\  0.00&  0.50&  0.50\end{array}\right]}
\def\experiencePOne{\left[\begin{array}{ccc}  0.44\\  0.56\\  0.71\end{array}\right]}
\def\experiencePK{\left[\begin{array}{ccc}  0.56\\  0.44\\  0.29\end{array}\right]}
\def\experienceMPrime{\left[\begin{array}{ccc}  0.57&  0.82&  1.23\end{array}\right]}
\def\experienceDeltaQ{\left[\begin{array}{ccc}  0.25& -1.00&  0.75\end{array}\right]}
\def\experienceInlineDeltaQ{$[0.25$~$-1.00$~$0.75$]}
\def\experienceDeltaQM{0.2465}
\def\experienceStructurelearnRate{  0.75}
\def\experienceStructuredepRate{  0.50}
\def\experienceStructurebeta{  0.80}
\def\experienceStructureuOne{-0.50}
\def\experienceStructureuThree{0.50}
\begin{document}

\title{
Identifying the Discount Factor in Dynamic Discrete Choice Models\protect\thanks{We thank Nikhil Agarwal, \'{A}ureo de Paula, Jean-Pierre Dub\'{e}, Hanming Fang, Christian Hansen, G\"{u}nter Hitsch, John-Eric Humphries, Robert Miller, Whitney Newey, John Rust, Azeem Shaikh, Eduardo Souza-Rodrigues, Elie Tamer, Michela Tincani, Thomas Wollmann, the editor (Christopher Taber), anonymous referees, and participants in the 2016 Cowles Commission Summer Conference, the 69th European Meeting of the Econometric Society, the 2nd Conference on Structural Dynamic Models in Copenhagen, the BRIQ Workshop on Structural Analysis of Inequality in Bonn, and seminars at Tilburg, Toronto, Copenhagen, Chicago, UCL, LSE, Penn State, DIW Berlin, Georgetown, EIEF Rome, Cambridge, Warwick, Bonn, Z\"{u}rich, UC Irvine, CalTech, and USC for helpful discussions and comments. Jaap Abbring's research is financially supported by the Netherlands Organisation for Scientific Research (NWO) through Vici grant 453-11-002. \O ystein Daljord gratefully acknowledges financial support from the University of Chicago Booth School of Business. }}

\author{Jaap H. Abbring\thanks{CentER, Department of Econometrics \& OR, Tilburg University, P.O. Box 90153, 5000 LE Tilburg, The Netherlands; and CEPR. E-mail: \href{mailto:jaap@abbring.org}{jaap@abbring.org}. Web: \href{http://jaap.abbring.org}{jaap.abbring.org}.}\and
\O ystein Daljord\thanks{Booth School of Business, University of Chicago, 5807 South Woodlawn Avenue, Chicago, IL 60637, USA. E-mail:
\href{mailto:Oeystein.Daljord@chicagobooth.edu}{Oeystein.Daljord@chicagobooth.edu}. Web: \href{http://faculty.chicagobooth.edu/oystein.daljord}{faculty.chicagobooth.edu/oystein.daljord}.
\newline
{\em Keywords:} discount factor, dynamic discrete choice, empirical content, identification.
\newline {\em JEL codes:} C25, C61.
}}

\date{September 2019}

\maketitle

\vspace*{-7mm}
\begin{abstract}
Empirical research often cites observed choice responses to variation that shifts expected discounted future utilities, but not current utilities, as an intuitive source of information on time preferences. We study the identification of dynamic discrete choice models under such economically motivated exclusion restrictions on primitive utilities. We show that each exclusion restriction leads to an easily interpretable moment condition with the discount factor as the only unknown parameter. The identified set of discount factors that solves this condition is finite, but not necessarily a singleton. Consequently, in contrast to common intuition, an exclusion restriction does not in general give point identification. Finally, we show that exclusion restrictions have nontrivial empirical content: The implied moment conditions impose restrictions on choices that are absent from the unconstrained model.
\end{abstract}

\thispagestyle{empty}

\clearpage

\section{Introduction}

Identification of the discount factor in dynamic discrete choice models is crucial for their application to the evaluation of agents' responses to dynamic interventions. It is, however, well known that the discount factor is not identified from choice data without further restrictions (\citealp{nh94:rust}, Lemma 3.3, and  \citealp{ecta02:magnacthesmar}, Proposition 2).  Consequently, empirical researchers usually fix the discount factor at some a priori plausible value, e.g. 0.95, or impose ad hoc functional form assumptions that allow it to be identified and estimated. These approaches solve the identification problem, but often lack economic justification. Inferring the time preferences in the specific context of an application is important as discount factors have been estimated to vary substantially across choice contexts and populations \citep{fredericketal02}.\footnote{\citeauthor{fredericketal02} also showed that geometric discounting is often rejected in data in favor of present biased time preferences. We study the identification and estimation of hyperbolic discount functions in \cite{abbringdaljordiskhakov18}.}  

In this paper, we explore identification from observed choice responses to variation that shifts expected discounted future utilities, but not current utilities. Such variation is commonly cited in applications as an intuitive source of information on time preferences. For example, in studies of green technology adoption, \cite{bollinger15}  and \cite{aer19:degrooteverboven} argued that firms' and households' current choice responses to regulation that shifts their future expenses, but not their current expenses, are informative about discount factors. In a study of demand for game consoles, \cite{Lee2013} assumed that the discount factor is identified from variation in the expected quality of future releases, which shifts future values without affecting current payoffs.  In an application to cellphone plan choice, \cite{yaoetal12} argues informally that utilities can be identified in a terminal period when the choice problem is static. The discount factor can subsequently be identified from choices in the next to last period.   \cite{chungetal14} appeals to \cite{yaoetal12}'s idea in a study of salesforce compensation plans. We give further examples from the literature in Section \ref{s:examples}.

In Section \ref{s:new}, we formalize the intuition in these studies as an exclusion restriction on primitive utilities. We first consider a stationary model with infinite horizon (introduced in Section \ref{s:model}). We prove that, in contrast to common intuition, an exclusion restriction on primitive utilities does not generally point identify the discount factor. It does however narrow the identified set--- the set of observationally equivalent discount factors--- to a discrete and, if we exclude values near one, finite set. This set contains the solutions to a moment condition that only involves the discount factor and that has a straightforward interpretation in terms of choice responses to variation in expected discounted future utilities. The moment condition can be used directly in estimation, independently of the rest of the model parameters. 

We subsequently provide a finite upper bound on  the number of discount factors in the identified set for the case in which the states display finite dependence, as defined by \cite{arcidiaconomiller11, arcidiaconomillerlongshort}. Examples include optimal stopping and renewal problems, which we show to be point identified. 

We extend our analysis to nonstationary models with finite horizons, which are commonly used in labor applications (\citealp{res89:ecksteinwolpin} and \citealp{jpe97:keanewolpin} are early examples). We show that, with exclusion restrictions, the discount factor is generally identified up to a finite set in these models. 

In Section \ref{s:empiricalcontent}, we explore the empirical content of exclusion restrictions. \citeauthor{ecta02:magnacthesmar}'s Proposition 2 implies that dynamic discrete choice models without exclusion restrictions cannot be falsified with data on choices and states. In that sense, the models have no empirical content. We show that exclusion restrictions impose nontrivial restrictions on the data, which can be tested. 

Finally, in Section \ref{s:multiple}, we argue that common intuition often supports {\em multiple} exclusion restrictions, which imply multiple moment conditions. These moment conditions share the true discount factor (if one exists that rationalizes the data) as one solution, but may have individually more solutions. We discuss how standard (set) estimators can be applied to this case. 

 This paper's main contribution is to provide a simple and intuitively appealing analysis of identification of the discount factor in dynamic discrete choice models under economically motivated exclusion restrictions. Our analysis complements a substantial literature in econometrics (see \citealp{nh94:rust} and \citealp{are10:abbring}, for reviews). \citeauthor{ecta02:magnacthesmar}'s Proposition 4 established point identification based on a different type of exclusion restriction than ours: the existence of a pair of states that affects, in some specific way, expected discounted future utilities, but not the ``current value,'' which is a difference in expected discounted utilities between two particular choice sequences. This is a high level exclusion restriction that is difficult to interpret and hard to verify in applications. In particular, unlike our exclusion restriction, it does not formalize the common intuition that is given in applications like those discussed above. Empirical applications often incorrectly cite \citeauthor{ecta02:magnacthesmar}'s result as one for an exclusion restriction on primitive utility. For example, in a study of housing location choice, \citet[][p. 921]{ecta16:bayeretal} wrote
\begin{quote}
Magnac and Thesmar (2002) \ldots showed that dynamic models are identified with an appropriate exclusion restriction--- in particular, a variable that shifts expectations but not current utility. In the context  of  our  framework,  lagged  amenities  provide  exactly  this  sort  of  exclusion  restriction:  while  current  utility  depends  on  the  current  level  of  the amenities provided in a neighborhood, lagged amenity levels help predict how amenities will evolve going forward and thus contribute to expectations about the future utility associated with that choice of neighborhood.
\end{quote}
We show how \citeauthor{ecta16:bayeretal}'s exclusion restriction can be used to set identify and estimate the discount factor, even if it is insufficient for point identification.\footnote{We thank John Rust for this example.} 

\citeauthor{ecta02:magnacthesmar}'s identification result relies on a rank condition that ensures sufficient variation in expected discounted future utilities. This rank condition does not suffice for point identification with our exclusion restriction on primitive utilities. We do however use natural extensions of this condition to ensure local identification of myopic preferences, which is needed for our discrete set identification result. 

\citeauthor{ecta02:magnacthesmar}'s Proposition 2 implies that, without further restrictions, not only the discount factor, but also the utility of one reference choice can be normalized without restricting the observed choice and transition probabilities. Intuitively, discrete choices only identify utility contrasts, not levels. However,  {\em counterfactual} choice probabilities, which are often the objects of interest in dynamic discrete choice analysis, are generally not invariant to the choice of reference utility \citep{res14:noretstang,kalouptsidietal16}. This suggests that we do not only treat the discount factor, but also the utility of the reference choice as a free parameter that should be determined from data. Indeed, we view the identification of the reference utility as an important, but separate problem from the identification of the discount factor. For expositional convenience, we derive our main results under the normalization that the reference utility equals zero. In the appendix, we show that our results straightforwardly extend to the case in which the reference utility is known up to a constant shift. 

We emphasize that the idea of using exclusion restrictions to identify time preferences in choice models is not ours, but has circulated in the literature for a while. One early example is \cite{chevaliergoolsbee09}, which studied demand for textbooks. Its choice model implicitly excluded the expected future resale price of a textbook from the current period pay-off to identify a discount factor.  \cite{ier15:fangwang} explicitly proposed the use of exclusion restrictions similar to ours to identify a dynamic discrete choice model with partially naive hyperbolic time preferences. In \citet{fw19:abbringdaljord}, we argue that \citeauthor{ier15:fangwang}'s main generic identification result has no implications for the identification of the model with hyperbolic discounting or its special case with geometric discounting, which we study. In any case, our approach is different: We isolate the specific empirical implications of the exclusion restrictions, whereas \citeauthor{ier15:fangwang} studied their model as a general system of nonlinear equations, using results from differential topology.

\cite{qe18:komarovaetal} showed point identification of the discount factor under parametric assumptions on the utility function in a model like ours, but without exclusion restrictions. \citeauthor{res14:noretstang} demonstrated that in a model with parametric utility, point identification is lost to set identification when the distribution of unobservables is allowed to deviate from a known one, such as the type-1 extreme value specification that underlies logit choice probabilities. Without any restrictions on the distribution of unobservables beyond conditional independence and absolute continuity, all identification of the discount factor is lost, i.e. the identified set of discount factors is the unit interval.   We instead focus on identification for a nonparametric utility function under economically motivated exclusion restrictions.  We map each exclusion restriction to an easily interpretable and computable moment condition that directly informs the identification and estimation of the discount factor, and the model's empirical content. 

\section{Model}
\label{s:model}

Consider a stationary dynamic discrete choice model \citep[e.g.][]{nh94:rust}. Time is discrete with an infinite horizon.\footnote{Section \ref{ss:finitehorizon} considers an extension to a nonstationary model with a finite horizon.} In each period, agents first observe state variables $x$ and $\varepsilon$, where $x$ takes discrete values in ${\cal X}=\{x_1,\ldots,x_J\}$ and $\varepsilon=\{\varepsilon_1,\ldots,\varepsilon_K\}$ is continuously distributed on $\mathbb{R}^K$; for $J,K\geq 2$. Then, they choose $d$ from the set of alternatives ${\cal D}=\{1,2,\ldots,K\}$ and collect utility $u_d(x,\varepsilon)=u_d^*(x)+\varepsilon_d$. Finally,  they move to the next period with new state variables $x'$ and $\varepsilon'$ drawn from a Markov transition distribution controlled by $d$. We assume that a version of \cites{ecta87:rust} conditional independence assumption holds. Specifically, $x'$ is drawn independently of $\varepsilon$ from the  transition distribution $Q_k\left(\cdot|x\right)$ for any choice $k \in {\cal D}$; and  $\varepsilon_1,\ldots,\varepsilon_K$ are independently drawn from mean zero type-1 extreme value distributions.\footnote{\citeauthor{ecta02:magnacthesmar} showed that the distribution of $\varepsilon$ cannot be identified and took it to be known. Our type-1 extreme value assumption leads to the canonical multinomial logit case. Our results extend directly to any other known, continuous distribution on $\mathbb{R}^K$.}  Agents maximize the rationally expected utility flow discounted with factor $\beta\in[0,1)$.

Each choice $d$ equals the option $k$ that maximizes the choice-specific  expected discounted utility (or, simply, ``value'') $v_k(x,\varepsilon)$. The additive separability of $u_k(x,\varepsilon)$ and conditional independence imply that $v_k(x,\varepsilon)=v^*_k(x)+\varepsilon_k$, with $v^*_k$ the unique solution to
\begin{equation}
\label{eq:bellman}
\begin{split}
v^*_k(x)&=u^*_k(x) +\beta  \mathbb{E}\left[\max_{{k'}\in{\cal D}}\{v^*_{k'}(x')+\varepsilon'_{k'}\} \;\vline\; d=k,x\right]\\
&=u^*_k(x) + \beta\int \mathbb{E}\left[\max_{{k'}\in{\cal D}}\{v^*_{k'}( \tilde x)+\varepsilon'_{k'}\}\right]dQ_k(\tilde x|x)
\end{split}
\end{equation}
for all $k\in {\cal D}$. Here, for each given $\tilde{x}\in {\cal X}$, 
\begin{equation}
\label{eq:McFadden}
	\mathbb{E}\left[\max_{{k'}\in{\cal D}}\{v^*_{k'}(\tilde{x})+\varepsilon'_{k'}\}\right]=\ln\left(\sum_{k'\in{\cal D}} \exp\left(v^*_{k'}(\tilde{x})\right)\right)
\end{equation} 
is the McFadden surplus for the choice among $k'\in{\cal D}$ with utilities $v^*_{k'}(\tilde{x})+\varepsilon'_{k'}$.

Suppose we have data on choices $d$ and state variables $x$ that allow us to determine $Q_k(\cdot|\tilde{x})$ and the choice probabilities $p_k(\tilde{x})=\Pr(d=k | x=\tilde{x})$ for all $k\in{\cal D}$ and $\tilde{x}\in{\cal X}$. The model is point identified if and only if we can uniquely determine its primitives from these data. As we discuss in Section \ref{s:empiricalcontent}, a version of \citeauthor{ecta02:magnacthesmar}'s Proposition 2 holds: There exist unique (up to a standard utility normalization) values of the primitives that rationalize the data for any given discount factor $\beta\in[0,1)$. We therefore focus our identification analysis on $\beta$.

The choice probabilities are fully determined by
\begin{equation}
\label{eq:choiceprobinversion}
 \ln\left(p_k(\tilde{x}) \right) - \ln\left(p_K(\tilde{x})\right)=v^*_k(\tilde{x})- v^*_K(\tilde{x}),~~~~~~k\in{\cal D}/\{K\},~\tilde{x}\in{\cal X}.
\end{equation}
The transition probabilities $Q_k(\cdot|\tilde{x})$, the value contrasts $v^*_k(\tilde{x})- v^*_K(\tilde{x})$ for $k\in{\cal D}/\{K\}$ and $\tilde{x}\in{\cal X}$ therefore capture all the model's implications for the data.  \citet{res93:hotzmiller} pointed out that \eqref{eq:choiceprobinversion} can be inverted to identify the value contrasts from the choice probabilities. To use this, we first rewrite (\ref{eq:bellman}) as
\begin{equation}
\label{eq:bellmanprobs}
v^*_k(x)=u^*_k(x)+\beta\int \left(m(\tilde x)+v^*_K(\tilde x)\right) dQ_k(\tilde x|x),
\end{equation}
where, for given $\tilde{x}\in{\cal X}$, $m(\tilde{x})=\mathbb{E}\left[\max_{{k'}\in{\cal D}}\{v^*_{k'}(\tilde{x})-v^*_{K}(\tilde{x})+\varepsilon'_{k'}\}\right]$ is the ``excess  surplus'' (over $v^*_K(\tilde{x})$), the McFadden surplus for the choice among $k'\in{\cal D}$ with utilities $v^*_{k'}(\tilde{x})-v^*_{K}(\tilde{x})+\varepsilon'_{k'}$. By \eqref{eq:McFadden}  and \eqref{eq:choiceprobinversion}, it follows that $m(\tilde{x})=-\ln \left(p_K(\tilde{x})\right)$.

\section{Identification}
\label{s:new}

Let $\mathbf{v}_k$, $\mathbf{p}_k$, $\mathbf{u}_k$, and $\mathbf{m}$ be $J\times 1$ vectors with $j$-th elements $v^*_k(x_j)$, $p_k(x_j)$, $u^*_k(x_j)$, and $m(x_j)$, respectively. Let $\mathbf{Q}_k$ be the $J\times J$ matrix with $(j,j')$-th entry $Q_k(x_{j'}|x_j)$ and $\mathbf{I}$ be a $J\times J$ identity matrix. Note that the $J\times 1$ vector $\mathbf{m}  + \mathbf{v}_K$ stacks the McFadden surpluses in \eqref{eq:McFadden}.  In this notation, the data are $\{\mathbf{p}_k,\mathbf{Q}_k; k\in{\cal D}\}$ and directly identify $\mathbf{m}=-\ln\mathbf p_K$ \citep[][Lemma 1 and Section 3.3]{arcidiaconomiller11}.  

\subsection{Magnac and Thesmar's result}
\label{ss:magnacthesmar}

We can rewrite (\ref{eq:bellmanprobs})  as $v^*_k(x)=u^*_k(x)+\beta\mathbf{Q}_k(x)\left[\mathbf{m}+\mathbf{v}_K\right]$, where $\mathbf{Q}_k(x_j)$ is the $j$-th row of $\mathbf{Q}_k$. Subtracting the same expression for $v^*_K(x)$, rearranging, and substituting (\ref{eq:choiceprobinversion}), we get
\begin{equation}
\label{eq:uMT}
\ln (p_k(x))-\ln(p_K(x))=\beta\left[\mathbf{Q}_k(x)-\mathbf{Q}_K(x)\right]\mathbf{m}+ U_k(x),
\end{equation}
where $U_k(x)=u^*_k(x) -u_K^*(x) + \beta\left[\mathbf{Q}_k(x)-\mathbf{Q}_K(x)\right]\mathbf{v}_K$ is \citeauthor{ecta02:magnacthesmar}'s  ``current value" of choice $k$ in state $x$. Its Proposition 4 assumes the existence of a known option $k\in{\cal D}/\{K\}$ and a known pair of states $\tilde x_1,\tilde x_2\in {\cal X}$ such that $\tilde{x}_1\not =\tilde{x}_2$ and $U_k(\tilde x_1)=U_k(\tilde x_2)$. Under this exclusion restriction, differencing (\ref{eq:uMT}) evaluated at $\tilde x_1$ and $\tilde x_2$ yields
\begin{equation}
\label{eq:diffMT}
\begin{split}
\ln\left(p_k(\tilde x_1)/p_K(\tilde x_1)\right) - \ln\left(p_k(\tilde x_2)/p_K(\tilde x_2)\right)&\\
&\hspace*{-14em}= \beta\left[\mathbf{Q}_k(\tilde x_1)-\mathbf{Q}_K(\tilde x_1)- \mathbf{Q}_k(\tilde x_2)+\mathbf{Q}_K(\tilde x_2)\right]\mathbf{m}.
\end{split}
\end{equation}
Given the choice and transition probabilities, the left hand side of \eqref{eq:diffMT} is a known scalar and its right hand side is a known linear function of $\beta$. Therefore, provided that \citeauthor{ecta02:magnacthesmar}'s rank condition 
\begin{equation}
\label{eq:rankMT}
\left[\mathbf{Q}_k(\tilde x_1)-\mathbf{Q}_K(\tilde x_1)- \mathbf{Q}_k(\tilde x_2)+\mathbf{Q}_K(\tilde x_2)\right]\mathbf{m}\not = 0
\end{equation} 
holds, moment condition \eqref{eq:diffMT}  uniquely determines $\beta$ in terms of the choice data. 

This identification argument can be interpreted in terms of an experiment that shifts the expected excess surplus contrast  $\left[\mathbf{Q}_k(x) - \mathbf{Q}_K(x)\right]\mathbf{m}$ by changing the state $x$ from $\tilde{x}_2$ to $\tilde{x}_1$, while keeping the current value $U_k(\tilde{x}_1)=U_k(\tilde{x}_2)$ constant. The discount factor is the per unit effect of that observed shift on the observed log choice probability ratio $\ln\left(p_k(x)/p_K(x)\right)$. 

A shift in the expectation contrast $\mathbf{Q}_k(x) - \mathbf{Q}_K(x)$ does not suffice for identification. For example, suppose that the exclusion restriction holds for some $\tilde{x}_1, \tilde{x}_2\in {\cal X}$, but that the excess surplus $m(x_1) = \cdots=m(x_J)$ is constant, so that the expected excess surplus contrast $\left[\mathbf{Q}_k(x) - \mathbf{Q}_K(x)\right]\mathbf{m}=0$. Then, a shift in the expectation contrast does not shift the expected excess surplus contrast and hence does not change the decision problem. Consequently, this shift is not informative on $\beta$ and \citeauthor{ecta02:magnacthesmar}'s rank condition \eqref{eq:rankMT} fails. 

Rank condition \eqref{eq:rankMT} has a meaningful interpretation and is verifiable in data. The exclusion restriction $U_k(\tilde x_1)=U_k(\tilde x_2)$, however, is more problematic, because it imposes opaque conditions on the primitives that are hard to verify in applications. The current values depend on both current utilities and discounted expected future values. Specifically, they involve elements of $\mathbf{v}_K$,  which by (\ref{eq:bellmanprobs}) equals 
\begin{equation}\label{eq:vK}
\mathbf{v}_K=\left[\mathbf{I}-\beta \mathbf{Q}_K\right]^{-1}\left[\mathbf{u}_K+\beta\mathbf{Q}_K\mathbf{m}\right]. 
\end{equation}
The current value is in fact a value contrast between two sequences of choices: choose $k$ now, $K$ in the next period, and choose optimally ever after, relative to choose $K$ now, $K$ in the next period, and choose optimally ever after.  Because this particular value contrast does not correspond to common economic choice sequences, the applied value of \citeauthor{ecta02:magnacthesmar}'s restriction is limited. It is hard to think of naturally occurring experiments that shift the expected contrasts in excess surplus, i.e. satisfy the rank condition, without also shifting the current value and consequently violating the exclusion restriction, except for special cases.  Indeed, the intuitive identification arguments in the introduction's empirical examples do not involve current values, but exclusion restrictions on primitive utility.

\subsection{An exclusion restriction on primitive utility}

Like \citeauthor{ecta02:magnacthesmar}, we start with (\ref{eq:uMT}) or, equivalently, 
\begin{equation}
\label{eq:uMTnew}
\ln \mathbf{p}_k - \ln \mathbf{p}_K
=  \beta\left[\mathbf{Q}_k-\mathbf{Q}_K\right]\left[\mathbf{m}+\mathbf{v}_K\right] + \mathbf{u}_k -\mathbf{u}_K.
\end{equation}
Instead of controlling the contribution of $\mathbf{v}_K$ to the right hand side with an exclusion restriction on the current value, we exploit that it can be expressed in terms of the model primitives. Substituting (\ref{eq:vK}) in (\ref{eq:uMTnew}) and rearranging gives 
\begin{equation}
\label{eq:u}
\ln \mathbf{p}_k - \ln \mathbf{p}_K
=  \beta\left[\mathbf{Q}_k-\mathbf{Q}_K\right]\left[\mathbf{I}-\beta \mathbf{Q}_K\right]^{-1}\left[\mathbf{m}+\mathbf{u}_K\right] + \mathbf{u}_k-\mathbf{u}_K.
\end{equation}
Intuition from static discrete choice analysis and \citeauthor{ecta02:magnacthesmar}'s results for dynamic models suggest that, for identification, we need to fix utility in one reference alternative, say $\mathbf{u}_K$. Intuitively, choices only depend on, and thus inform about, utility contrasts. Thus,  following e.g. \citeauthor{ier15:fangwang} and \citet{nber15:bajarietal}, we set $\mathbf{u}_K=\mathbf{0}$.\footnote{Note that this normalization does not collapse \citeauthor{ecta02:magnacthesmar}'s exclusion restriction on current values to an easily interpretable restriction on primitives.} This normalization cannot be refuted by data without further restrictions (see Section \ref{s:empiricalcontent}). Despite this lack of empirical content, it is not completely innocuous, as it may affect the model's counterfactual predictions (see e.g. \citeauthor{res14:noretstang}, Lemma 2, and \citeauthor{kalouptsidietal16}). In the appendix, we demonstrate that our analysis applies without change to the case in which $u^*_K(x)$ is constant, but not necessarily zero, and can straightforwardly be extended to the case in which $u^*_K(x)$ is known up to a constant shift, but not necessarily constant. Thus, our analysis of the identification of the discount factor complements identification results for the reference utility $u^*_K$.\footnote{\citet{usc15:chou} recently provided identification results for dynamic discrete choice models without a normalization of $u^*_K$. \citeauthor{usc15:chou}'s results for the stationary model that we study here take the discount factor to be known. \citeauthor{usc15:chou}'s Propositions 3, 7, and 8 for a nonstationary model like the one we study in Section \ref{ss:finitehorizon} provide high-level sufficient conditions for point identification, whereas we focus on set identification under intuitive conditions. A general difference is that we emphasize the economic interpretation of the identifying conditions and that we provide results on their empirical content.} 

Now suppose that we know the value of $u_k^*(\tilde x_1)-u_l^*(\tilde x_2)$ for some known choices $k\in{\cal D}/\{K\}$ and $l \in {\cal D}$ and known states $\tilde x_1 \in {\cal X}$ and $\tilde x_2\in{\cal X}$; with either $k\neq l$,  $\tilde{x}_1\not =\tilde{x}_2$, or both. For expositional convenience only (see the appendix for the general case), we take this known value to be zero, and simply focus on the exclusion restriction
\begin{align}\label{eq:exclusionrestriction}
u_k^*(\tilde x_1)=u_l^*(\tilde x_2).
\end{align}
An advantage of this exclusion restriction over \citeauthor{ecta02:magnacthesmar}'s current value restriction is that it is a direct constraint on primitive utility with a clear economic interpretation. It also extends \citeauthor{ecta02:magnacthesmar} by allowing for restrictions on primitive utilities across combinations of choices and states.

\subsection{The identified set}

Under exclusion restriction \eqref{eq:exclusionrestriction}, we can difference \eqref{eq:u} to implicitly relate $\beta$ to the choice data (the choice and transition probabilities), without  reference to any other unknown parameters (the utilities):
\begin{equation}
\label{eq:diffAD}
\begin{split}
\ln\left(p_k(\tilde x_1)/p_K(\tilde x_1)\right) - \ln\left(p_l(\tilde x_2)/p_K(\tilde x_2)\right)&\\
&\hspace*{-12em}= \beta\left[\mathbf{Q}_k(\tilde x_1)-\mathbf{Q}_K(\tilde x_1)- \mathbf{Q}_l(\tilde x_2)+\mathbf{Q}_K(\tilde x_2)\right]  \left[\mathbf{I}-\beta \mathbf{Q}_K\right]^{-1}\mathbf{m},
\end{split}
\end{equation}
For any discount factor that solves \eqref{eq:diffAD}, unique primitive utilities can be found that rationalize the choice data, and these utilities satisfy exclusion restriction \eqref{eq:exclusionrestriction}.\footnote{The argument in Section \ref{s:empiricalcontent}, which establishes a version of  \citeauthor{ecta02:magnacthesmar}'s Proposition 2, implies that the utilities that rationalize the choice data for a given discount factor solve \eqref{eq:u} for $\mathbf{u}_k$. It follows straightforwardly that they satisfy \eqref{eq:exclusionrestriction} whenever \eqref{eq:diffAD} holds.} So, without further assumptions or data, moment condition \eqref{eq:diffAD} contains all the information about the discount factor in the choice data under exclusion restriction \eqref{eq:exclusionrestriction} and can be used directly for its identification and estimation.\footnote{Additional exclusion restrictions (as in Section \ref{s:multiple}) and functional form assumptions on the utility functions may provide further information on the discount factor. After all, the utilities that rationalize the choice data for a discount factor that solves \eqref{eq:diffAD} may not satisfy these additional constraints.}$^\text{,}$\footnote{Similarly, moment condition \eqref{eq:diffMT} contains all the information about the discount factor under \citeauthor{ecta02:magnacthesmar}'s exclusion restriction on current values.}

Unlike the right hand side of (\ref{eq:diffMT}), the right hand side of (\ref{eq:diffAD}) is not linear in $\beta$. Nevertheless, given data on transition and choice probabilities, it is a well-behaved, known function of $\beta$. It is therefore easy to characterize the identified set  ${\cal B}$ of values of $\beta\in[0,1)$ that equate it to the known left hand side of (\ref{eq:diffAD}). 
\begin{theorem}
\label{th:ident}
Suppose that the exclusion restriction in \eqref{eq:exclusionrestriction} holds for some $k\in{\cal D}/\{K\}$, $l \in {\cal D}$, $\tilde x_1 \in {\cal X}$, and $\tilde x_2\in{\cal X}$; with either $k \neq l$,  $\tilde{x}_1\not =\tilde{x}_2$, or both. Moreover, suppose that either the left hand side of (\ref{eq:diffAD}) is nonzero (that is, $p_k(\tilde x_1)/p_K(\tilde x_1)\neq p_l(\tilde x_2)/p_K(\tilde x_2)$)
or a generalization of \citeauthor{ecta02:magnacthesmar}'s rank condition \eqref{eq:rankMT} holds:
\begin{equation}
\label{eq:rank}
\left[\mathbf{Q}_k(\tilde x_1)-\mathbf{Q}_K(\tilde x_1)- \mathbf{Q}_l(\tilde x_2)+\mathbf{Q}_K(\tilde x_2)\right]\mathbf{m}\not = 0.
\end{equation} 
Then, the identified set ${\cal B}$ is a closed discrete subset of $[0,1)$.
\end{theorem}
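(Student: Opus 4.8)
The plan is to treat the right-hand side of \eqref{eq:diffAD} as a single scalar function of $\beta$ and show that, under the stated hypotheses, it meets the constant left-hand side only at isolated points. Write $\mathbf{a}=\mathbf{Q}_k(\tilde x_1)-\mathbf{Q}_K(\tilde x_1)-\mathbf{Q}_l(\tilde x_2)+\mathbf{Q}_K(\tilde x_2)$ for the data-determined row vector appearing in \eqref{eq:diffAD} and \eqref{eq:rank}, let $c$ denote the constant left-hand side of \eqref{eq:diffAD}, and define
\begin{equation*}
f(\beta)=\beta\,\mathbf{a}\left[\mathbf{I}-\beta\mathbf{Q}_K\right]^{-1}\mathbf{m}-c,
\end{equation*}
so that ${\cal B}=\{\beta\in[0,1):f(\beta)=0\}$. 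Here $\mathbf{a}$, $\mathbf{m}=-\ln\mathbf{p}_K$, $\mathbf{Q}_K$, and $c$ are all fixed by the data and do not depend on $\beta$.

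First I would argue that $f$ is real-analytic on $[0,1)$. Because $\mathbf{Q}_K$ is a stochastic matrix its spectral radius equals one, so every eigenvalue $\lambda$ satisfies $|\lambda|\le 1$ and no factor $1-\beta\lambda$ can vanish at a real $\beta\in[0,1)$ (this would force $\lambda=1/\beta>1$); hence $\det(\mathbf{I}-\beta\mathbf{Q}_K)\neq 0$ on $[0,1)$ and $\mathbf{I}-\beta\mathbf{Q}_K$ is nonsingular there. By Cramer's rule every entry of $[\mathbf{I}-\beta\mathbf{Q}_K]^{-1}$ is a ratio of polynomials in $\beta$ with the nonvanishing denominator $\det(\mathbf{I}-\beta\mathbf{Q}_K)$, so $f$ is a rational function of $\beta$ with no poles on $[0,1)$ and is therefore real-analytic. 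Continuity of $f$ immediately gives that ${\cal B}=f^{-1}(\{0\})$ is closed in $[0,1)$, which disposes of the closedness claim.

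The substantive step, and the main obstacle, is to rule out the degenerate possibility $f\equiv 0$; this is exactly what the two alternative hypotheses deliver. If the left-hand side is nonzero, then evaluating at the origin gives $f(0)=-c\neq 0$, so $f$ is not identically zero. If instead the rank condition \eqref{eq:rank} holds, I would expand the resolvent in its Neumann series $[\mathbf{I}-\beta\mathbf{Q}_K]^{-1}=\sum_{n\ge 0}\beta^n\mathbf{Q}_K^n$ for small $\beta$, obtaining $f(\beta)=-c+\sum_{n\ge 0}\beta^{n+1}\,\mathbf{a}\,\mathbf{Q}_K^n\mathbf{m}$, so that the coefficient of $\beta$ is $f'(0)=\mathbf{a}\,\mathbf{m}$, which is nonzero precisely by \eqref{eq:rank}; again $f\not\equiv 0$. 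The hypotheses are phrased as a disjunction so that at least one of these two nonvanishing certificates—the value or the first derivative at $\beta=0$—is always available.

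Finally I would invoke the identity theorem for real-analytic functions: a real-analytic function on the interval $[0,1)$ that is not identically zero cannot have a zero that is an accumulation point of its zero set, so its zeros are isolated. Hence ${\cal B}$ has no accumulation point in $[0,1)$ and is discrete; combined with the closedness established above, this shows that ${\cal B}$ is a closed discrete subset of $[0,1)$. I expect the only delicate points to be the justification that the determinant never vanishes on $[0,1)$ (where the stochasticity of $\mathbf{Q}_K$ is essential) and the verification that the disjunctive hypothesis is exactly the condition ensuring $f\not\equiv 0$; everything else is routine. Note that discreteness, rather than finiteness, is the natural conclusion here, since zeros could in principle accumulate only at the excluded endpoint $\beta=1$, which is why finiteness requires bounding $\beta$ away from one.
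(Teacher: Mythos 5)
Your proof is correct, and its skeleton is the same as the paper's: establish invertibility of $\mathbf{I}-\beta\mathbf{Q}_K$, show the defining function $f$ is real-analytic, use the disjunctive hypothesis to certify $f\not\equiv 0$ via the two "certificates" $f(0)=-c\neq 0$ or $f'(0)=\mathbf{a}\,\mathbf{m}\neq 0$ (the latter being exactly \eqref{eq:rank}), and invoke the identity theorem to conclude the zeros are isolated; closedness follows from continuity in both arguments. The one genuine difference is the mechanism for analyticity. The paper expands $\left[\mathbf{I}-\beta\mathbf{Q}_K\right]^{-1}$ as the Neumann series $\sum_{n\geq 0}\beta^n\mathbf{Q}_K^n$ on $(-1,1)$, cites Krantz--Parks for analyticity of a convergent power series, and works on the open interval $(-1,1)$ before intersecting with $[0,1)$; you instead use Cramer's rule to exhibit $f$ as a rational function whose denominator $\det(\mathbf{I}-\beta\mathbf{Q}_K)$ does not vanish on $[0,1)$. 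Your route actually buys more than you claim: writing $f(\beta)=N(\beta)/\det(\mathbf{I}-\beta\mathbf{Q}_K)$ with $N(\beta)=\beta\,\mathbf{a}\,\mathrm{adj}(\mathbf{I}-\beta\mathbf{Q}_K)\,\mathbf{m}-c\,\det(\mathbf{I}-\beta\mathbf{Q}_K)$ a polynomial of degree at most $J$, the condition $f\not\equiv 0$ on $[0,1)$ forces $N$ to be a nonzero polynomial, so ${\cal B}$ is in fact \emph{finite}, with at most $J$ elements --- a conclusion strictly stronger than the theorem's and one that subsumes Corollary~\ref{cor:finite}. This makes your closing remark --- that zeros "could in principle accumulate only at the excluded endpoint $\beta=1$," so that discreteness rather than finiteness is the natural conclusion --- inconsistent with your own rational representation: a rational function that is not identically zero has finitely many zeros, so no accumulation can occur anywhere, including at $1$. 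That remark is an aside and does not damage the proof of the stated theorem, but you should either delete it or upgrade the conclusion to finiteness, which your argument delivers for free while the paper's power-series argument does not.
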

\begin{proof}
We need to show that, under the stated conditions, ${\cal B}\subseteq[0,1)$ has no limit points in $[0,1)$. First note that $\left[\mathbf{I}-\beta \mathbf{Q}_K\right]^{-1}$ exists for $\beta\in(-1,1)$ and equals 
\begin{equation}
\label{eq:series}
\mathbf{I}+\beta\mathbf{Q}_K+\beta^2 \mathbf{Q}_K^2+\cdots.
\end{equation}
This is trivial for $\beta=0$. If $|\beta|\in(0,1)$, it follows from the facts that $|\beta^{-1}|>1$ and that $\mathbf{Q}_K$ is a Markov transition matrix, with eigenvalues no larger than 1 in absolute value. Consequently, the determinant of $\mathbf{Q}_K-\beta^{-1}\mathbf{I}$ is nonzero, so that $\mathbf{I}-\beta \mathbf{Q}_K$ is invertible and the power series in \eqref{eq:series} converges.

It follows that, for given choice and transition probabilities, the right hand side of \eqref{eq:diffAD} minus its left hand side is a real-valued power series in $\beta$ that converges on $(-1,1)$. Denote the function of $\beta$ this defines with $f:(-1,1)\rightarrow\mathbb{R}$. Corollary 1.2.4 in \citet{birk02:krantzparks} ensures that $f$ is real analytic. 

Denote ${\cal B}^*=\{\beta\in (-1,1)\; |\; f(\beta)=0\}$. Note that ${\cal B}={\cal B}^*\cap[0,1)$.  First, suppose that $f$ has no zeros (${\cal B}^*=\emptyset$). Then, ${\cal B}=\emptyset$ has no limit point in $[0,1)$. 

Finally, suppose that $f$ has at least one zero (${\cal B}^*\neq\emptyset$). Then, $f$ cannot be constant (and thus equal zero) under the stated conditions: If the left hand side of \eqref{eq:diffAD} is nonzero then, because its right hand side equals zero at $\beta=0$, $f(0)$ is nonzero; if rank condition \eqref{eq:rank} holds, then the derivative of the right hand side of \eqref{eq:diffAD} at $\beta=0$, and therefore of $f$ at $0$, is nonzero. Because $f$ is a nonconstant real-analytic function, its zero set ${\cal B}^*$ has no limit point in $(-1,1)$ (\citeauthor{birk02:krantzparks}, Corollary 1.2.7). Because ${\cal B}={\cal B}^*\cap[0,1)$, this implies that ${\cal B}$ has no limit point in $[0,1)$. 
\end{proof}

Under the conditions of Theorem \ref{th:ident}, each $\beta\in[0,1)$ that is consistent with (\ref{eq:diffAD}) is an isolated point in $[0,1)$ and thus locally identified. Note that $\beta=1$ is excluded from the model to ensure convergence of the discounted utility flows. Theorem \ref{th:ident} does not exclude that $1$ is a limit point of the identified set. So, the identified set may contain countably many discount factors near $1$. However, because a closed discrete set is finite on compact subsets, only finitely many discount factors in the identified set lie outside a neighborhood of $1$.
\begin{corollary}
\label{cor:finite}
Under the conditions of Theorem \ref{th:ident}, ${\cal B}\cap[0,1-\epsilon]$ is finite for $0<\epsilon<1$.
\end{corollary}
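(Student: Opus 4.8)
The plan is to derive the corollary directly from Theorem \ref{th:ident}, invoking the elementary topological fact that a set without limit points, intersected with a compact set, is finite. Theorem \ref{th:ident} already establishes that, under its conditions, the identified set ${\cal B}$ is a closed discrete subset of $[0,1)$; equivalently, ${\cal B}$ has no limit point in $[0,1)$. I would take this conclusion as given and build on it.

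First I would observe that for any $\epsilon$ with $0<\epsilon<1$, the interval $[0,1-\epsilon]$ is a compact subset of $[0,1)$: since $1-\epsilon<1$, it is closed, bounded, and, crucially, bounded away from the excluded endpoint $1$. Hence ${\cal B}\cap[0,1-\epsilon]$ is a subset of this compact interval, and it inherits from ${\cal B}$ the property of having no limit point in $[0,1)$, and therefore none in $[0,1-\epsilon]$.

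Next I would argue by contradiction. Suppose ${\cal B}\cap[0,1-\epsilon]$ were infinite. By the Bolzano--Weierstrass theorem, every infinite subset of the compact set $[0,1-\epsilon]$ possesses a limit point lying in $[0,1-\epsilon]$. Because $[0,1-\epsilon]\subseteq[0,1)$, such a limit point would be a limit point of ${\cal B}$ in $[0,1)$, contradicting the conclusion of Theorem \ref{th:ident} that ${\cal B}$ has no limit point in $[0,1)$. Therefore ${\cal B}\cap[0,1-\epsilon]$ must be finite, which is the claim.

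There is no genuine obstacle in this argument; the only point deserving a moment's attention is verifying that the limit point furnished by compactness actually lies in $[0,1)$ rather than at the excluded value $1$. This is secured precisely by the truncation at $1-\epsilon$, which keeps the interval away from $1$, so that any limit point of the intersection lands in $[0,1-\epsilon]$ and hence squarely within the domain on which Theorem \ref{th:ident} rules out limit points. This is also why the corollary restricts attention to $[0,1-\epsilon]$ rather than all of $[0,1)$: Theorem \ref{th:ident} leaves open the possibility that $1$ is a limit point of ${\cal B}$, so finiteness can only be guaranteed on subsets bounded away from $1$.
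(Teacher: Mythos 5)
Your proof is correct and is essentially the paper's own argument: the paper justifies the corollary by the remark that a closed discrete set is finite on compact subsets, which is exactly the fact you establish via Bolzano--Weierstrass on $[0,1-\epsilon]$. You simply spell out the standard topological step that the paper leaves implicit, including the key observation that truncation at $1-\epsilon$ keeps any limit point away from $1$.
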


\noindent In many applications, one may be able to argue against discount factors that are arbitrarily close to $1$. Corollary \ref{cor:finite} shows that, in such applications, it suffices to search for the finite number of discount factors in a compact set $[0,1-\epsilon]$ that solve (\ref{eq:diffAD}), which is computationally easy. 

The right hand side of  \eqref{eq:diffAD} is the log choice probability difference implied by the model with an exclusion restriction across choices $k$ and $l$ and states $\tilde x_1$ and $\tilde x_2$. From the proof of Theorem \ref{th:ident}, we know it equals the discount factor $\beta$, which represents how much the agent cares about the next period, multiplied by the sum of two terms that capture how much relevant variation in next period's expected discounted utility there is for the agent to care about: 
\begin{equation}
\label{eq:firstterm}
\left[\mathbf{Q}_k(\tilde x_1)-\mathbf{Q}_K(\tilde x_1)- \mathbf{Q}_l(\tilde x_2)+\mathbf{Q}_K(\tilde x_2)\right] \mathbf{m}
\end{equation}
 and 
\begin{equation}
\label{eq:secondterm}
\begin{split}
&\left[\mathbf{Q}_k(\tilde x_1)-\mathbf{Q}_K(\tilde x_1)- \mathbf{Q}_l(\tilde x_2)+\mathbf{Q}_K(\tilde x_2)\right] \mathbf{v}_K=\\
&\left[\mathbf{Q}_k(\tilde x_1)-\mathbf{Q}_K(\tilde x_1)- \mathbf{Q}_l(\tilde x_2)+\mathbf{Q}_K(\tilde x_2)\right]  \left[\beta \mathbf{Q}_K+ \beta^2 \mathbf{Q}_K^2+\cdots\right]\mathbf{m}.
\end{split}
\end{equation}
The first term \eqref{eq:firstterm} does not depend on $\beta$. It is nonzero if the generalized rank condition \eqref{eq:rank} holds. It corresponds to the leading, linear term in the right hand side of \eqref{eq:diffAD}, which extends the right hand side of \citeauthor{ecta02:magnacthesmar}'s \eqref{eq:diffMT} to the possibility of comparing across distinct choices $k$ and $l$. 

The next section gives conditions under which the second term \eqref{eq:secondterm} vanishes. Section \ref{s:examples} gives economic examples in which these conditions hold. If they hold, the right hand side of \eqref{eq:diffAD} is linear in $\beta$, so that \eqref{eq:diffAD} uniquely determines $\beta$ under the generalized rank condition \eqref{eq:rank}.

In general, the second term \eqref{eq:secondterm}  does not vanish and depends on $\beta$. Then, the right hand side of  \eqref{eq:diffAD} is not linear in $\beta$, but its derivative at $\beta=0$ still equals the first term \eqref{eq:firstterm}.\footnote{The derivative corresponding to the second term \eqref{eq:secondterm} vanishes because choice $K$ has zero value if the agent is myopic.} Therefore, if the generalized rank condition  \eqref{eq:rank} holds,  this derivative is nonzero and myopic preferences ($\beta=0$) are locally identified.\footnote{Here, $\beta$ is locally identified at some $\beta_0$ if $\beta=\beta_0$ uniquely solves \eqref{eq:diffAD} in a neighborhood of $\beta_0$. Rank condition \eqref{eq:rank} is not {\em necessary} for local identification of $\beta$ at zero; for that, higher order variation of the right hand side of  \eqref{eq:diffAD} in $\beta$ at zero would suffice \citep{ecta83:sargan}.} In economic terms, the rank condition ensures that there is variation in next period's expected discounted utility for a myopic agent to care about, so that only myopic preferences can explain a lack of choice response. In Theorem \ref{th:ident}, the rank condition excludes the trivial case that a zero choice response is observed and the right hand side of \eqref{eq:diffAD} equals zero for all $\beta$.

In the case that a zero choice response is observed, local identification of myopic preferences does not rule out that the data are also consistent with some positive discount factors, as there may be $\beta\in(0,1)$ such that the sum of \eqref{eq:firstterm} and \eqref{eq:secondterm} is zero (that is, there is no variation in next period's expected discounted utility for the agent to care about). These discount factors, if any, can easily be found by searching for the solutions of \eqref{eq:diffAD}. In particular, if the sum of  \eqref{eq:firstterm} and \eqref{eq:secondterm} is nonzero for all $\beta\in(0,1)$, only myopic preferences can explain the lack of choice response.

More generally, rank condition \eqref{eq:rank} does not suffice for point identification of $\beta$. As the next example demonstrates, the same observed choice response may arise from a combination of a low $\beta$ (little care about the next period) and a large absolute sum of \eqref{eq:firstterm} and \eqref{eq:secondterm} (lots of variation in the next period to care about) and from a combination of a high $\beta$ and a small absolute sum of \eqref{eq:firstterm} and \eqref{eq:secondterm}.
\begin{figure}[t]
\caption{Example in Which the Rank Condition Holds, but Identification Fails\label{fig:identificationFails}}
\centering
\medskip
\begin{tikzpicture}
   \begin{axis}[width=\figwidth,height=\figheight,
        name=llherr,
        xlabel={$\beta$},
        axis x line=bottom,
        every outer x axis line/.append style={-,color=gray,line width=0.75pt},
        axis y line=left,
        every outer y axis line/.append style={-,color=gray,line width=0.75pt},
        xtick={0,\identificationFailsPreciseBetaOne,\identificationFailsPreciseBetaTwo},
        xticklabels={$0$,$\identificationFailsBetaOne$,$\identificationFailsBetaTwo$},
        ytick={0,\identificationFailsPreciseLhs},
        yticklabels={$0$,$\identificationFailsLhs$},
        every tick/.append style={line width=0.75pt},
	xmin=0, 
	xmax=1,
	ymin=0,
        scaled ticks=false,
        /pgf/number format/precision=2,
        /pgf/number format/set thousands separator={}]
        \draw[color=black!75,line width=0.75pt,solid] (axis cs:0,\identificationFailsPreciseLhs) -- (axis cs:1,\identificationFailsPreciseLhs);
        \addplot[color=blue!75,smooth,mark=,line width=0.75pt] table[x=beta,y=rhs,col sep=comma]{matlab/data/identificationFails.csv} ;
		\draw[color=blue!75,line width=0.75pt,dotted] (axis cs:\identificationFailsPreciseBetaOne,0) -- (axis cs:\identificationFailsPreciseBetaOne,\identificationFailsPreciseLhs);
         	\draw[color=blue!75,line width=0.75pt,dotted] (axis cs:\identificationFailsPreciseBetaTwo,0) -- (axis cs:\identificationFailsPreciseBetaTwo,\identificationFailsPreciseLhs);
        	\addplot[color=red!75,dashed,mark=,line width=0.75pt] table[x=beta,y=rhsMT,col sep=comma]{matlab/data/identificationFails.csv} ;
        		\draw[color=red!75,line width=0.75pt,dotted] (axis cs:\identificationFailsPreciseBetaMT,0) -- (axis cs:\identificationFailsPreciseBetaMT,\identificationFailsPreciseLhs);
\end{axis}
\end{tikzpicture}
\medskip

\begin{minipage}{\figwidth}
{\scriptsize Note: For $J=\identificationFailsJ$ states, $K=\identificationFailsK$ choices, $k=l=1$, $\tilde{x}_1=x_1$, and $\tilde{x}_2=x_2$, this graph plots the left hand side of  (\ref{eq:diffMT})  and (\ref{eq:diffAD}) (solid black horizontal line) and the right hand sides of (\ref{eq:diffMT}) (dashed red line), and (\ref{eq:diffAD}) (solid blue curve) as functions of $\beta$. The data are $\mathbf{Q}_1(x_1)=\identificationFailsQiRowOne$, $\mathbf{Q}_1(x_2)=\identificationFailsQiRowTwo$,}

{\scriptsize 
\vspace*{-3ex} 
\[\mathbf{Q}_K=\identificationFailsQK\text{, }\mathbf{p}_1=\identificationFailsPOne\text{, and }\mathbf{p}_K=\identificationFailsPK.\] 

Consequently, the left hand side of (\ref{eq:diffMT})  and (\ref{eq:diffAD}) equals $\ln\left(p_1(x_1)/p_K(x_1)\right) - \ln\left(p_1(x_2)/p_K(x_2)\right)= \identificationFailsLhs$. Moreover,  $\mathbf{m}'=\identificationFailsMPrime$ and $\mathbf{Q}_1(x_1)-\mathbf{Q}_K(x_1)- \mathbf{Q}_1(x_2)+\mathbf{Q}_K(x_2)=\identificationFailsDeltaQ$, so that the slope of the dashed red line equals $\left[\mathbf{Q}_1(x_1)-\mathbf{Q}_K(x_1)- \mathbf{Q}_1(x_2)+\mathbf{Q}_K(x_2)\right] \mathbf{m}=\identificationFailsDeltaQM$. A unique value of $\beta$,  $\identificationFailsBetaMT$, solves  (\ref{eq:diffMT}), but two values of  $\beta$ solve (\ref{eq:diffAD}): $\identificationFailsBetaOne$ and $\identificationFailsBetaTwo$.}
\end{minipage}
\end{figure}
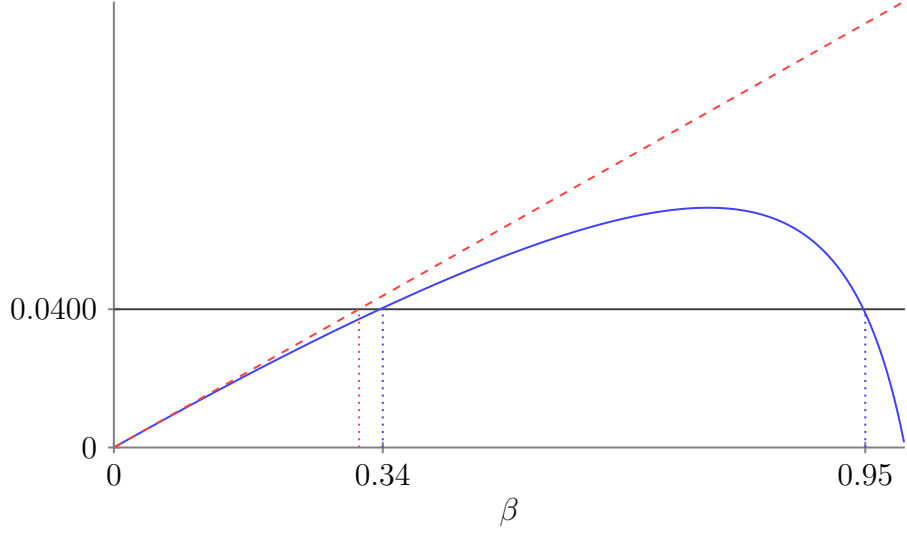

\begin{example}
\label{ex:identificationFails}
Figure \ref{fig:identificationFails} plots the left hand side of \eqref{eq:diffMT} and \eqref{eq:diffAD} (solid black line) and the right hand sides of \eqref{eq:diffMT} (dashed red line) and \eqref{eq:diffAD} (solid blue curve) for a specific example with $K=2$ choices, $k=l=1$, and $J=3$ states. The example's data satisfy the rank condition in \eqref{eq:rank}. Under the current value restriction, there is a unique discount factor that rationalizes the data (the intersection of the black and the dashed red curve). Under the primitive utility restriction, there are two discount factors that rationalize the same data (the intersections of the black and the solid blue curve).
\end{example}

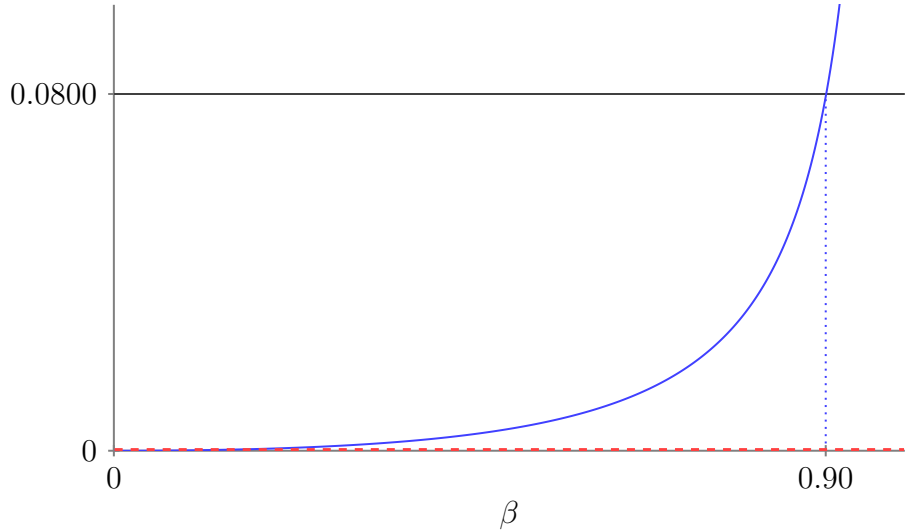
\begin{figure}[t]
\caption{Example in Which the Rank Condition Fails, but the Discount Factor is Identified\label{fig:needNoMT}}
\centering
\medskip
\begin{tikzpicture}
    \begin{axis}[width=\figwidth,height=\figheight,
        name=llherr,
        xlabel={$\beta$},
        axis x line=bottom,
        every outer x axis line/.append style={-,color=gray,line width=0.75pt},
        axis y line=left,
        every outer y axis line/.append style={-,color=gray,line width=0.75pt},
        xtick={0,\needNoMTPreciseBetaOne},
        xticklabels={$0$,$\needNoMTBetaOne$},
        ytick={0,\needNoMTPreciseLhs},
        yticklabels={$0$,$\needNoMTLhs$},
        every tick/.append style={line width=0.75pt},
        	xmin=0, 
	xmax=1,
	ymin=0,
	ymax=0.1,
        scaled ticks=false,
        /pgf/number format/precision=2,
        /pgf/number format/set thousands separator={}]
        \draw[color=black!75,line width=0.75pt,solid] (axis cs:0,\needNoMTPreciseLhs) -- (axis cs:1,\needNoMTPreciseLhs);
        	\draw[color=blue!75,line width=0.75pt,dotted] (axis cs:\needNoMTPreciseBetaOne,0) -- (axis cs:\needNoMTPreciseBetaOne,\needNoMTPreciseLhs);
        \addplot[color=blue!75,smooth,mark=,line width=0.75pt] table[x=beta,y=rhs,col sep=comma]{matlab/data/needNoMT.csv};
        	\addplot[color=red!75,dashed,mark=,line width=2pt] table[x=beta,y=rhsMT,col sep=comma]{matlab/data/needNoMT.csv};
\end{axis}
\end{tikzpicture}
\medskip

\begin{minipage}{\figwidth}
{\scriptsize Note: For $J=\needNoMTJ$ states, $K=\needNoMTK$ choices, $k=l=1$, $\tilde{x}_1=x_1$, and $\tilde{x}_2=x_2$, this graph plots the left hand side of  (\ref{eq:diffMT})  and (\ref{eq:diffAD}) (solid black horizontal line) and the right hand sides of (\ref{eq:diffMT}) (dashed red line) and (\ref{eq:diffAD}) (solid blue curve) as functions of $\beta$. The data are $\mathbf{Q}_1(x_1)=\needNoMTQiRowOne$, $\mathbf{Q}_1(x_2)=\needNoMTQiRowTwo$,}

{\scriptsize 
\vspace*{-3ex} 
\[\mathbf{Q}_K=\needNoMTQK\text{, }\mathbf{p}_1=\needNoMTPOne \text{, and }\mathbf{p}_K=\needNoMTPK.\] 

Consequently, the left hand side of (\ref{eq:diffMT})  and (\ref{eq:diffAD}) equals $\ln\left(p_1(x_1)/p_K(x_1)\right) - \ln\left(p_1(x_2)/p_K(x_2)\right)= \needNoMTLhs$. Moreover,  $\mathbf{m}'=\needNoMTMPrime$ and $\mathbf{Q}_1(x_1)-\mathbf{Q}_K(x_1)- \mathbf{Q}_1(x_2)+\mathbf{Q}_K(x_2)=\needNoMTDeltaQ$, so that the slope of the dashed red line equals $\left[\mathbf{Q}_1(x_1)-\mathbf{Q}_K(x_1)- \mathbf{Q}_1(x_2)+\mathbf{Q}_K(x_2)\right] \mathbf{m}=\needNoMTDeltaQM$. A unique value of $\beta$,  $\needNoMTBetaOne$, solves  (\ref{eq:diffAD}), but (\ref{eq:diffMT}) has no solution.}
\end{minipage}
\end{figure}

Our next example shows that the rank condition in \eqref{eq:rank} is not necessary for point identification either.
\begin{example}
\label{ex:needNoMT}
Figure \ref{fig:needNoMT} presents an example in which \eqref{eq:firstterm} equals zero, so that the right hand side of \eqref{eq:diffMT} and the first (excess surplus) term in the right hand side of \eqref{eq:diffAD} are zero, but the second (value of choice $K$) term in the right hand side of \eqref{eq:diffAD} is positive and increasing with $\beta$. There exists exactly one $\beta\in[0,1)$ that solves \eqref{eq:diffAD}, despite the violation of the rank condition. 

Also note that there is no value of $\beta$ that satisfies \eqref{eq:diffMT}. Even though the data can be rationalized by some specification of the model, they are not consistent with the current value restriction. In other words, this restriction has empirical content.  We return to this point in Section \ref{s:empiricalcontent}. 
\end{example}

\subsection{Finite dependence}
\label{ss:finitedep}

Some of the examples in the next subsection display a variant of \cites{arcidiaconomiller11}  ``finite dependence''. Finite dependence is a property of dynamic discrete choice models that can considerably simplify estimation and is widely used in applications \cite[see][for references]{arcidiaconomillerfinitedependence15}. 

In our context, finite dependence implies that the moment condition is of finite and known polynomial order. This order provides an upper bound on the number of solutions for the discount factor in $\mathbb{R}$, and therefore in $[0,1)$. For example, in the case with $k\neq l=K$,  \eqref{eq:secondterm} reduces to
\begin{equation}
\label{eq:secondtermfindep}
\left[\mathbf{Q}_k(\tilde x_1)-\mathbf{Q}_K(\tilde x_1)\right] \mathbf{v}_K=\left[\mathbf{Q}_k(\tilde x_1)-\mathbf{Q}_K(\tilde x_1)\right]  \left[\beta \mathbf{Q}_K+ \beta^2 \mathbf{Q}_K^2+\cdots\right]\mathbf{m}.
\end{equation}
Suppose that $\mathbf{Q}_k(\tilde x_1)\mathbf{Q}_K^\rho=\mathbf{Q}_K(\tilde x_1)\mathbf{Q}_K^\rho$ for some $\rho\in\{1,2,\ldots\}$. That is, the distribution of the state $\rho+1$ periods from now does not depend on whether the agent chooses $k$ or $K$ now, provided that she follows up in both cases by choosing $K$ in the next $\rho$ periods (independently of whether this is optimal or not). Under this ``single action ($K$) $\rho$-period dependence'' \citep{arcidiaconomillerlongshort} on choices $k$ and $K$ in state $\tilde x_1$, $\mathbf{Q}_k(\tilde x_1)\mathbf{Q}_K^r=\mathbf{Q}_K(\tilde x_1)\mathbf{Q}_K^r$ for all $r\in\{\rho,\rho+1,\ldots\}$.\footnote{Throughout, we focus on this special case of \cites{arcidiaconomiller11} finite dependence, which turns out to be particularly powerful in our specific context.} Now assume that Theorem \ref{th:ident}'s conditions hold. If $\rho=1$, the right hand side of \eqref{eq:secondtermfindep} equals zero, the current value 
\[
U_k(\tilde x_1)=u^*_k(\tilde x_1) + \beta\left[\mathbf{Q}_k(\tilde x_1)-\mathbf{Q}_K(\tilde x_1)\right]\mathbf{v}_K=u^*_k(\tilde x_1),
\]  
the right hand side of \eqref{eq:diffAD}  is linear in $\beta$, and $\beta$ is point identified. If instead $\rho\geq 2$, then the right hand side of \eqref{eq:secondtermfindep}  equals
\begin{equation*}
\left[\mathbf{Q}_k(\tilde x_1)-\mathbf{Q}_K(\tilde x_1)\right]  \left[\beta \mathbf{Q}_K+ \cdots +\beta^{\rho-1} \mathbf{Q}_K^{\rho-1}\right]\mathbf{m},
\end{equation*}
the right hand side of \eqref{eq:diffAD} is a $\rho$-th order polynomial in $\beta$, and the identified set ${\cal B}$ holds no more than $\rho$ discount factors. This example straightforwardly extends to the general exclusion restriction in \eqref{eq:exclusionrestriction}, which we state without further proof.
\begin{theorem}\label{th:finitedependence}
Suppose that the conditions of Theorem \ref{th:ident} hold and that $\{\mathbf{Q}_k; k\in{\cal D}\}$  satisfies single action ($K$) $\rho$-period dependence on choices $k$ and $K$ in state ${\tilde x}_1$,
\[  
\mathbf{Q}_k(\tilde x_1)\mathbf{Q}_K^\rho=\mathbf{Q}_K(\tilde x_1)\mathbf{Q}_K^\rho,
\]
and single action ($K$) $\rho$-period dependence on choices $l$ and $K$ in state ${\tilde x}_2$,
\[
\mathbf{Q}_l(\tilde x_2)\mathbf{Q}_K^\rho=\mathbf{Q}_K(\tilde x_2)\mathbf{Q}_K^\rho,
\]
for some $\rho\in\{1,2,\ldots\}$. Then there are no more than $\rho$ points in the identified set ${\cal B}$. 
\end{theorem}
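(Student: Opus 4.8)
The plan is to show that, under the two finite-dependence hypotheses, the right hand side of \eqref{eq:diffAD} collapses from an infinite power series to a polynomial in $\beta$ of degree at most $\rho$, and then to invoke Theorem \ref{th:ident} to rule out the degenerate (identically zero) case, so that the elementary bound on the number of roots of a nonzero polynomial caps $|{\cal B}|$ by $\rho$.

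First I would set $\boldsymbol{\Delta}=\mathbf{Q}_k(\tilde x_1)-\mathbf{Q}_K(\tilde x_1)-\mathbf{Q}_l(\tilde x_2)+\mathbf{Q}_K(\tilde x_2)$, the $1\times J$ row vector that already appears on the right hand side of \eqref{eq:diffAD}. Subtracting the second finite-dependence identity $\mathbf{Q}_l(\tilde x_2)\mathbf{Q}_K^\rho=\mathbf{Q}_K(\tilde x_2)\mathbf{Q}_K^\rho$ from the first $\mathbf{Q}_k(\tilde x_1)\mathbf{Q}_K^\rho=\mathbf{Q}_K(\tilde x_1)\mathbf{Q}_K^\rho$ immediately gives $\boldsymbol{\Delta}\,\mathbf{Q}_K^\rho=\mathbf{0}$. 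Right-multiplying by $\mathbf{Q}_K^{\,r-\rho}$ then yields $\boldsymbol{\Delta}\,\mathbf{Q}_K^{\,r}=\mathbf{0}$ for every $r\geq\rho$, exactly as in the single-action argument the excerpt already carries out for the special case $l=K$.

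Next I would substitute the power series \eqref{eq:series} into the right hand side of \eqref{eq:diffAD}, writing it as $\sum_{r=0}^{\infty}\beta^{\,r+1}\boldsymbol{\Delta}\,\mathbf{Q}_K^{\,r}\mathbf{m}$. By the previous step every term with $r\geq\rho$ vanishes, so the sum truncates to $\sum_{r=0}^{\rho-1}\beta^{\,r+1}\boldsymbol{\Delta}\,\mathbf{Q}_K^{\,r}\mathbf{m}$, a polynomial in $\beta$ of degree at most $\rho$ with zero constant term. Hence moment condition \eqref{eq:diffAD} becomes a polynomial equation of degree at most $\rho$, and ${\cal B}$ is contained in the intersection of its real root set with $[0,1)$.

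The only remaining point, and the step that genuinely needs Theorem \ref{th:ident} rather than bare algebra, is to exclude that this polynomial is identically zero, in which case every $\beta$ would solve \eqref{eq:diffAD}. But under Theorem \ref{th:ident}'s conditions the function $f$ from its proof is nonconstant: either the left hand side of \eqref{eq:diffAD} is nonzero, forcing a nonzero constant term once it is moved across, or rank condition \eqref{eq:rank} holds, forcing the linear coefficient $\boldsymbol{\Delta}\mathbf{m}$ to be nonzero. Either way $f$ is a nonzero polynomial of degree at most $\rho$, so it has at most $\rho$ real roots and therefore at most $\rho$ roots in $[0,1)$, whence ${\cal B}$ contains no more than $\rho$ points. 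I expect no real obstacle here: all the content sits in the truncation bookkeeping of the first two steps, which the paper has already done for $l=K$, and the extension to general $l$ is immediate once one observes that both finite-dependence hypotheses feed into the single combined vector $\boldsymbol{\Delta}$.
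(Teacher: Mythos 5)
Your proof is correct and takes essentially the same route as the paper: the paper proves the case $l=K$ by exactly this power-series truncation argument and then states that the general case ``straightforwardly extends,'' which is precisely the extension you carry out by combining the two finite-dependence conditions into $\boldsymbol{\Delta}\,\mathbf{Q}_K^{\rho}=\mathbf{0}$. Your explicit non-degeneracy check via Theorem \ref{th:ident}'s hypotheses (nonzero left hand side of \eqref{eq:diffAD}, or rank condition \eqref{eq:rank} making the linear coefficient $\boldsymbol{\Delta}\,\mathbf{m}$ nonzero) correctly supplies the detail the paper leaves implicit.
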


Theorem \ref{th:finitedependence} applies finite dependence to cancel differences in expected discounted utilities across pairs of choices twice, once for each of the two states that appear in the exclusion restriction. In the special case that the exclusion restriction concerns a comparison across states $\tilde x_1$ and $\tilde x_2$ for a given choice $k=l$, the right hand side of \eqref{eq:secondterm} reduces to 
\begin{equation}
\label{eq:secondtermStates}
\left[\mathbf{Q}_k(\tilde x_1)-\mathbf{Q}_K(\tilde x_1)- \mathbf{Q}_k(\tilde x_2)+\mathbf{Q}_K(\tilde x_2)\right]  \left[\beta \mathbf{Q}_K+ \beta^2 \mathbf{Q}_K^2+\cdots\right]\mathbf{m}.
\end{equation}
By Theorem \ref{th:finitedependence}, single action ($K$) $\rho$-period dependence on choices $k$ and $K$ in states ${\tilde x}_1$  and ${\tilde x}_2$  implies that the identified set contains at most $\rho$ discount factors. If $\rho=1$, then both $U_k(\tilde x_1)=u^*(\tilde x_1)$ and $U_k(\tilde x_2)=u^*(\tilde x_2)$, \eqref{eq:secondtermStates} equals $0$, and the discount factor is point identified.

In this case with $k=l$, the consequent of Theorem \ref{th:finitedependence} would also hold if, alternatively,
\[
\mathbf{Q}_k(\tilde x_1)\mathbf{Q}_K^\rho=\mathbf{Q}_k(\tilde x_2)\mathbf{Q}_K^\rho ~~~\text{and}~~~ \mathbf{Q}_K(\tilde x_1)\mathbf{Q}_K^\rho=\mathbf{Q}_K(\tilde x_2)\mathbf{Q}_K^\rho,
\]
for some $\rho\in\{1,2,\ldots\}$. This is a form of single action ($K$) $\rho$-period dependence on the initial state (instead of the initial choice) under, respectively, choices $k$ and $K$. Under one-period dependence on initial states $\tilde x_1$ and $\tilde x_2$, current values do not necessarily reduce to primitive utilities, but it is still true that $U_k(\tilde x_1)-U_k(\tilde x_2)=u^*_k(\tilde x_1)-u^*_k(\tilde x_2)$, \eqref{eq:secondtermStates} equals $0$, and the discount factor is point identified.

\subsection{Examples}\label{s:examples}
Theorem \ref{th:ident} shows that the identified set of discount factors is discrete and, away from one, finite, but does not establish point identification. Indeed, Example \ref{ex:identificationFails} demonstrated that point identification may fail, even if rank condition \eqref{eq:rank} holds. 

Our first two examples below (Examples \ref{ex:medicare} and \ref{ex:loyaltyprogram}) illustrate applications from the literature in which the exclusion restriction is plausibly met, but the discount factor is not necessarily point identified.  We then give two examples (Examples \ref{ex:zurcher} and \ref{ex:dixit}) of optimal stopping problems with single action one-period dependence, which are point identified by Theorem \ref{th:finitedependence}. Finally, Example \ref{ex:experience} demonstrates that single action one-period dependence is not necessary for point identification. It is a labor supply model that does not exhibit such one-period dependence, but in which monotonicity of the moment condition in the discount factor gives point identification.  

A number of empirical studies of demand for health care insured under Medicare Part D base their identification on nonlinearities in the price schedules. Our first example describes an empirical strategy from this literature in which the primitive utility exclusion restriction seems plausibly met.
\begin{example}\label{ex:medicare}
As part of an informal identification argument, \cite{einavetal15} observed that changes in purchase behaviour around kinks in the price of insurance are informative on time preferences. In the data, insurees pay 25\% of additional expenditures out of pocket as long as their total yearly expenditures range between \$275 and \$2510, but contribute 100\% to all expenditures between \$2510 and \$5726. A myopic insuree would change her spending only after her total spending hits $\$2510$ and her out-of-pocket contributions increase. In contrast, a forward looking insuree who is close to the kink late in the year would limit her spending before hitting the increase in contributions. Changes in the propensity to spend towards the end of the year for those close to the kink are therefore taken to be informative on the discount factor. 

This argument  can be represented as an exclusion restriction. Let $x$ be the yearly expenditure, a state controlled by the choice of filling prescriptions. The utility $u_k(x)$ of a particular drug purchase $k$ is assumed constant for two expenditure levels $\tilde x_1 < \tilde x_2$ in $[275, 2510)$. Along with variation in expected future expenses because of the kinked price schedule, the exclusion restriction gives set identification by Theorem \ref{th:ident}.\footnote{\cite{gowrisankarantown15} used a similar argument to identify salience and myopia in a dynamic discrete choice model with parametric utility applied to Medicare Part D data.} 
\end{example}
The next example is from \cite{rossi18} which studied the effect of reward programs on gasoline sales using a dynamic discrete choice model.
\begin{example}\label{ex:loyaltyprogram}
In each period, a consumer can choose to buy gasoline ($k = 1$), or not ($k = K$). Consumers accumulate reward points $x$ by registering their gasoline purchases.  The accumulated points can be traded against nonpecuniary rewards at various point thresholds $\underline x$.  \citeauthor{rossi18} observed that the purchase frequency is accelerating in the accumulated points. By assuming that the current period payoff of a gasoline purchase at any price $\tilde y$ is independent of the accumulated points, i.e. that $u_k(\tilde y,\tilde x_1) = u_k(\tilde y, \tilde x_2)$ for all $\tilde y$ and $\tilde x_1, \tilde x_2 \in [0,\underline x)$, the purchase acceleration is informative on the discount factor. The closer the consumer is to qualify for a given reward, the less the future reward is discounted. This makes a current purchase more attractive and predicts a purchase frequency that is increasing in the reward points.
\end{example}

We next turn to optimal stopping problems. The first example is the bus engine replacement problem of \citet{ecta87:rust}. Though the plausibility of the exclusion restriction is questionable in this particular application, it illustrates how one-period dependence gives point identification in a well-known application of an optimal stopping model.
\begin{example}\label{ex:zurcher}
\citet{ecta87:rust} studied Harold Zurcher's management of a fleet of (independent) buses. In each period, Zurcher can either operate a bus as usual ($d=1$) or renew its engine ($d=K=2$). The payoff from operating the bus as usual depend on its mileage $x$ since last renewal, which both Zurcher and \citeauthor{ecta87:rust} observe, and an additive and independent shock. Renewal incurs a cost that is independent of mileage and resets mileage to $x_1=0$:
\[
\mathbf{Q}_K=\left[\begin{array}{cccc}
		1&\cdots&0&0\\
		\vdots&&\vdots&\vdots\\
		1&\cdots&0&0
	\end{array}\right].
\]
In terms of Section \ref{ss:finitedep}'s finite dependence, mileage is single action ($K$) one-period dependent on both initial mileage and the initial renewal choice. Consequently, Zurcher's expected discounted payoffs from renewal do not depend on mileage. In particular, with our normalization $\mathbf{u}_K=\mathbf{0}$,  $v^*_K(\tilde{x})=\beta \left(m(x_1) + v^*_K(x_1)\right)$ for all $\tilde{x}\in{\cal X}$. Since $v^*_K(\tilde{x})$ does not vary with $\tilde x$, $[\mathbf{Q}_1-\mathbf{Q}_K]\mathbf{v}_K=\mathbf{0}$, and $U_1(\tilde x) = u_1^*(\tilde x)$. Therefore, if we assume $u^*_1(\tilde x_1)=u^*_1(\tilde x_2)$, which may be questionable in this application,  \citeauthor{ecta02:magnacthesmar}'s exclusion restriction holds and its identification result applies.\footnote{Since mileage is the only observed state variable in this application, $u^*_1(\tilde x_1)=u^*_1(\tilde x_2)$ requires that the current payoffs from operating a bus are the same at $\tilde x_1$ and $\tilde x_2$ miles, for example because $\tilde x_1$ and $\tilde x_2$ lie on a known flat segment of Harold Zurcher's cost curve.} Its rank condition  (\ref{eq:rankMT}) simplifies to
\[
\left[\mathbf{Q}_1(\tilde x_1)- \mathbf{Q}_1(\tilde x_2)\right]\mathbf{m}\neq 0.
\]
That is, it simply requires that the expected next period's excess surplus differs between states $\tilde x_1$ and $\tilde x_2$ under continued operation of the bus (choice 1). 
\end{example}

Example \ref{ex:zurcher}'s analysis of optimal renewal extends to optimal stopping problems in which stopping ends the decision problem. For example, in \cites{ecta92:hopenhayn} model of firm dynamics with free entry, active firms solve optimal stopping problems in which they value exit $K$ at $\mathbf{v}_K=\mathbf{0}$. As in Example \ref{ex:zurcher}, the fact that $v^*_K(\tilde x)$ is constant in $\tilde x$ ensures that the expectation contrast $[\mathbf{Q}_1-\mathbf{Q}_K]\mathbf{v}_K=\mathbf{0}$, so that $U_1(\tilde x) = u_1^*(\tilde x)$. 

Of course, $[\mathbf{Q}_1-\mathbf{Q}_K]\mathbf{v}_K$ may equal zero even if $v^*_K(\tilde x)$ varies with $\tilde x$, in particular if the state is single action ($K$) one-period dependent on choices $1$ and $K$.

\begin{example}
\label{ex:dixit}
Consider a discrete time, econometric implementation of \cites{jpe89:dixit} model of firm entry and exit.\footnote{\cite{ddc15:abbringklein} presented this example's model with state independent entry costs, code for its estimation, and exercises that can be used in teaching dynamic discrete choice models.}  In each period, a firm chooses to either serve the market ($d=1$) or not ($d=K=2$). Its payoffs from serving the market depend on $x=(y,d_{-1})$, where $y$ is a profit shifter that follows an exogenous Markov process (that is, $y$ may affect choices but is not controlled by them) and $d_{-1}$ is the firm's choice in the previous period. The entry costs in profit state $\tilde y$ equal the difference between an incumbent's profit from serving the market and a new entrant's profit from doing so, $u^*_1(\tilde y,1)-u^*_1(\tilde y,K)$, which we assume to be nonnegative. As before, we set $\mathbf{u}_K=\mathbf{0}$, so that the exit costs $u^*_K(\tilde y,K)-u^*_K(\tilde y,1)$ are zero. 

The firm's value $v^*_K(y',k)$ from choosing inactivity ($K$) next period after choosing $d=k$ now may vary with next period's profit state $y'$, because the firm will have the option to reenter the market and this option's value may depend on $y'$. However, because exit costs are zero, this value does not depend on the current choice $k$: $v^*_K(y',1)=v^*_K(y',K)$. Moreover, by the assumption that $y$ follows an exogenous Markov process, the distribution of $y'$ given $(y,d_{-1},d=k)$ is independent of the current choice $k$ and the past choice $d_{-1}$, so that
\begin{equation}
\label{eq:QvDixit}
\mathbf{Q}_1(\tilde x)\mathbf{v}_K=\mathbb{E}\left[v^*_K(y',1) \;\vline\; y=\tilde y\right]
=\mathbb{E}\left[v^*_K(y',K) \;\vline\; y=\tilde y\right]=\mathbf{Q}_K(\tilde x)\mathbf{v}_K
\end{equation}

\noindent  for all $\tilde x=(\tilde y,\tilde d_{-1})\in{\cal X}$.  Consequently, as in Example \ref{ex:zurcher}, $[\mathbf{Q}_1(\tilde x)-\mathbf{Q}_K(\tilde x)]\mathbf{v}_K=0$ and $U_1(\tilde x) = u_1^*(\tilde x)$. Note that, in this case, the state $x=(y,d_{-1})$ is single action ($K$) one-period dependent on choices $1$ and $K$ (but generally not on initial states). 

An exclusion restriction $u_1^*(\tilde x_1)=u_1^*(\tilde x_2)$ implies \eqref{eq:diffMT} and, under rank condition \eqref{eq:rankMT},  point identification of $\beta$.  Because $y$ evolves independently of current and past choices, 
\begin{equation}
\label{eq:QmDixit}
\mathbf{Q}_k(\tilde x)\mathbf{m}=\mathbb{E}\left[m(y',k) \;\vline\; y=\tilde y\right]. 
\end{equation}

\noindent Thus, the rank condition is equivalent to
\begin{equation}
\label{eq:rankDixit}
\mathbb{E}\left[m(y',1)-m(y',K)\;\vline\; y=\tilde y_1\right]\not=\mathbb{E}\left[m(y',1)-m(y',K)\;\vline\; y=\tilde y_2\right].
\end{equation}

\noindent It immediately follows that identification requires that $\tilde y_1\not =\tilde y_2$ in this case. A difference in lagged choices alone would not suffice, because these do not help predict next period's profit state $y'$ given the current profit state $y$ and choice $d=k$ nor directly affect next period's excess surplus.

Moreover, identification fails if entry costs are zero; that is, if $u_1^*(\tilde y,1)=u_1^*(\tilde y,K)$. In this case, payoffs do not depend on past choices and, more specifically, $m(y',1)=m(y',K)$. Intuitively, without entry and exit costs, firms can ignore past and future when deciding on entry and exit and simply maximize the current profits in each period. Consequently, their entry and exit choices carry no information on their discount factor. As an aside, note that the entry costs are directly identified from 
\[ 
\ln\left(p_1(\tilde x_1)/p_K(\tilde x_1)\right) - \ln\left(p_1(\tilde x_2)/p_K(\tilde x_2)\right)=
u_1^*(\tilde y,1)- u_1^*(\tilde y,K)
\]

\noindent for $\tilde x_1=(\tilde y,1)$ and $\tilde x_2=(\tilde y,K)$. Intuitively, for given profit state $\tilde y$, lagged choices only affect current payoffs through the entry costs and have no effect on expected future payoffs, as is clear from \eqref{eq:QvDixit} and \eqref{eq:QmDixit}. 

Finally, if both $\tilde y_1\not =\tilde y_2$ and entry costs are strictly positive, \eqref{eq:rankDixit} will generally be satisfied. In specific applications, we can verify \eqref{eq:rankDixit} using that both the distribution of $y'$ conditional on $y$ and $m(y',k)=-\ln\left(p_K(y',k)\right)$ can directly be estimated from choice and profit state transition data.

As in Zurcher's problem, profit states are typically ordered, so that an exclusion restriction like $u_1^*(\tilde x_1)=u_1^*(\tilde x_2)$ may be justified as a local shape restriction on the firm's utility function. Alternatively, because the firm's utility is a cardinal payoff, we may be able to exploit that $u^*_1(\tilde x)$ is known in some state $\tilde x$. For example, if $u_1^*(\tilde x)=0$, then \eqref{eq:diffAD} holds with $k=1$, $l=K$, and $\tilde x_1=\tilde x_2=\tilde x=(\tilde y,\tilde d_{-1})$ and reduces to
\[
\ln\left(p_1(\tilde x)\right)-\ln\left(p_K(\tilde x)\right)=\beta\mathbb{E}\left[m(y',1)-m(y',K)\;\vline\; y=\tilde y\right],
\]

\noindent so that $\beta$ is identified if $\mathbb{E}\left[m(y',1)-m(y',K)\;\vline\; y=\tilde y\right]\not=0$. This rank condition is generally satisfied if entry costs are positive.
\end{example}

In Examples \ref{ex:zurcher} and \ref{ex:dixit}, the rank condition ensures that the shift in expected surplus contrasts that multiplies $\beta$ in the right hand side of \eqref{eq:diffAD} is nonzero. Because these examples satisfy one-period dependence, this shift does not depend on $\beta$ itself, and this suffices for point identification.
More generally,  even if the state is not one-period dependent, strict monotonicity of the right hand side of (\ref{eq:diffAD}), as in Example \ref{ex:needNoMT}, suffices for point identification (that is, ensures that a solution is unique if it exists). It is easy to derive conditions that imply such strict monotonicity, and thus point identification, and that do not involve $\beta$. Without loss of generality--- we can freely interchange states $\tilde x_1$ and $\tilde x_2$ and switch choices $k$ and $l$--- we focus on conditions under which it is strictly increasing or, equivalently, its derivative with respect to $\beta$ is positive:\footnote{Denoting $\Delta^2\mathbf{Q}\equiv \mathbf{Q}_k(\tilde x_1)-\mathbf{Q}_K(\tilde x_1)- \mathbf{Q}_l(\tilde x_2)+\mathbf{Q}_K(\tilde x_2)$, we have that
\begin{align*}
\frac{\partial}{\partial\beta}  \beta\Delta^2\mathbf{Q}\left[\mathbf{I}-\beta \mathbf{Q}_K\right]^{-1}\mathbf{m}
 & = \Delta^2 \mathbf{Q} [\mathbf{I} - \beta \mathbf{Q}_K]^{-1}\mathbf{m} + \beta \Delta^2 \mathbf{Q} \frac{\partial [\mathbf{I} - \beta \mathbf{Q}_K]^{-1}}{\partial \beta} \mathbf{m} \\
& = \Delta^2 \mathbf{Q} [\mathbf{I} - \beta \mathbf{Q}_K]^{-1}\mathbf{m} + \beta \Delta^2 \mathbf{Q} [\mathbf{I} - \beta \mathbf{Q}_K]^{-1}\mathbf{Q}_K [\mathbf{I} - \beta \mathbf{Q}_K]^{-1}\mathbf{m} \\
& =  \Delta^2 \mathbf{Q}[\mathbf{I} + (\mathbf{I} - \beta \mathbf{Q}_K)^{-1}\beta\mathbf{Q}_K][\mathbf{I} - \beta \mathbf{Q}_K]^{-1}\mathbf{m} 
 = \Delta^2 \mathbf{Q}[\mathbf{I}- \beta \mathbf{Q}_K]^{-2}\mathbf{m}.
\end{align*}
}
\[
\left[\mathbf{Q}_k(\tilde x_1)-\mathbf{Q}_K(\tilde  x_1)- \mathbf{Q}_l(\tilde x_2)+\mathbf{Q}_K(\tilde x_2)\right]  \left[\mathbf{I}-\beta \mathbf{Q}_K\right]^{-2}\mathbf{m}>0.
\]
For this, it suffices that
\begin{equation}\label{eq:rankmonotone}
\left[\mathbf{Q}_k(\tilde x_1)-\mathbf{Q}_K(\tilde x_1)- \mathbf{Q}_l(\tilde x_2)+\mathbf{Q}_K(\tilde x_2)\right] \mathbf{Q}_K^r\mathbf{m}\geq 0\text{ for all }r\in\{0,1,2,\ldots\},
\end{equation}
with the inequality strict for at least one $r$. Like \citeauthor{ecta02:magnacthesmar}'s rank condition (\ref{eq:rankMT}), these conditions do not depend on $\beta$. It is easy to verify that they hold in Example \ref{ex:needNoMT} (which is specified in the Note to Figure \ref{fig:needNoMT}).

\begin{figure}[t]
\caption{Example of a Dynamic Labor Supply Model that Gives a Monotone Moment Condition\label{fig:experience}}
\centering
\medskip
\begin{tikzpicture}
   \begin{axis}[width=\figwidth,height=\figheight,
        name=llherr,
        xlabel={$\beta$},
        axis x line=bottom,
        every outer x axis line/.append style={-,color=gray,line width=0.75pt},
        axis y line=left,
        every outer y axis line/.append style={-,color=gray,line width=0.75pt},
        xtick={0,\experiencePreciseBetaOne},
        xticklabels={$0$,$\experienceBetaOne$},
        ytick={0,\experiencePreciseLhs},
        yticklabels={$0$,$\experienceLhs$},
        every tick/.append style={line width=0.75pt},
	xmin=0, 
	xmax=1,
	ymin=0,
	ymax=1,
        scaled ticks=false,
        /pgf/number format/precision=2,
        /pgf/number format/set thousands separator={}]
        \draw[color=black!75,line width=0.75pt,solid] (axis cs:0,\experiencePreciseLhs) -- (axis cs:1,\experiencePreciseLhs);
        	\draw[color=blue!75,line width=0.75pt,dotted] (axis cs:\experiencePreciseBetaOne,0) -- (axis cs:\experiencePreciseBetaOne,\experiencePreciseLhs);
        \addplot[color=blue!75,smooth,mark=,line width=0.75pt] table[x=beta,y=rhs,col sep=comma]{matlab/data/experience.csv};
        	\addplot[color=red!75,dashed,mark=,line width=0.75pt] table[x=beta,y=rhsMT,col sep=comma]{matlab/data/experience.csv};
\end{axis}
\end{tikzpicture}
\medskip

\begin{minipage}{\figwidth}
{\scriptsize Note: For $J=\experienceJ$ states, $K=\experienceK$ choices, $k=l=1$, $\tilde{x}_1=x_2$, and $\tilde{x}_2=x_1$, this graph plots the left hand side of  (\ref{eq:diffMT})  and (\ref{eq:diffAD}) (solid black horizontal line) and the right hand sides of (\ref{eq:diffMT}) (dashed red line) and (\ref{eq:diffAD}) (solid blue curve) as functions of $\beta$ (we switched the roles of $x_1$ and $x_2$ to ensure a positive choice response and visually line up this example with the others). The data are generated from Example \ref{ex:experience}'s stylized dynamic labor supply model, which gives $\mathbf{Q}_1(x_2)=\experienceQiRowOne$, $\mathbf{Q}_1(x_1)=\experienceQiRowTwo$,}

{\scriptsize 
\vspace*{-3ex} 
\[\mathbf{Q}_K=\experienceQK\text{, }\mathbf{p}_1=\experiencePOne \text{, and }\mathbf{p}_K=\experiencePK.\] 

Consequently, the left hand side of (\ref{eq:diffMT})  and (\ref{eq:diffAD}) equals $\ln\left(p_1(x_2)/p_K(x_2)\right) - \ln\left(p_1(x_1)/p_K(x_1)\right)= \experienceLhs$. Moreover,  $\mathbf{m}'=\experienceMPrime$ and $\mathbf{Q}_1(x_2)-\mathbf{Q}_K(x_2)- \mathbf{Q}_1(x_1)+\mathbf{Q}_K(x_1)=\experienceDeltaQ$, so that the slope of the dashed red line equals $\left[\mathbf{Q}_1(x_2)-\mathbf{Q}_K(x_2)- \mathbf{Q}_1(x_1)+\mathbf{Q}_K(x_1)\right] \mathbf{m}=\experienceDeltaQM$. A unique value of $\beta$,  $\experienceBetaOne$, solves  (\ref{eq:diffAD}), but (\ref{eq:diffMT}) has no solution.}
\end{minipage}
\end{figure}
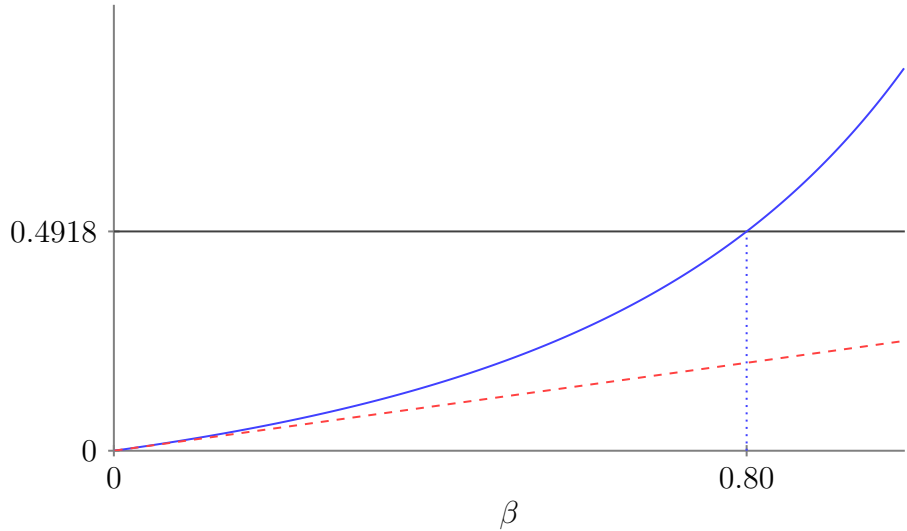

The final example relies on a type of payoff monotonicity that is common in models with ordered states. 
\begin{example}\label{ex:experience}
In \citeauthor{res89:ecksteinwolpin}'s dynamic model of female labor force participation, women work both to directly earn wages and to invest in work experience that pays off later. Consider a highly stylized and stationary variant of this model. Each period, a woman either works ($d=1$) or shirks ($d=2=K$). Work experience takes three levels, ``novice'' ($x_1$), ``learning'' ($x_2$), and ``seasoned'' ($x_3$). If a woman works and is not yet seasoned, her experience increases one level with probability $\experienceStructurelearnRate$ and stays the same with the complementary probability. If instead she shirks, and is not a novice, she falls back one level of experience with probability $\experienceStructuredepRate$ and keeps her experience otherwise. Work gives utility $u_1(x_1)=u_1(x_2)=\experienceStructureuOne$ if novice or learning and $u_1(x_3)=\experienceStructureuThree$ if seasoned. Women maximize their flow of expected utility, discounted with a factor $\experienceStructurebeta$.  

Figure \ref{fig:experience} gives the data implied by this example and plots the moment condition corresponding to the constraint that $u_1(x_1)=u_1(x_2)$. This constraint implies that novices and learning workers earn the same current utility. Nevertheless, work is more attractive to a learning woman, because she has a good shot at earning the higher wage for seasoned workers next period if she works now; moreover, unlike a novice, she may lose experience if she shirks. Seasoned workers, despite the fact that they cannot further increase their experience, are sufficiently motivated by the higher earnings and  the risk that their human capital depreciates to work even more. Consequently, $p_K(x_1)>p_K(x_2)>p_K(x_3)$, so that $m(x_1)<m(x_2)<m(x_3)$. More generally, because  $\mathbf{Q}_K$ is increasing,  the expected excess surplus after $r$ rounds of shirking and human capital depreciation, $\mathbf{Q}_K^r\mathbf{m}$, is increasing in initial experience. 

In this example, the dependence of (the distribution of) a worker's experience on initial choices and experience levels does not disappear in a finite number of periods of, e.g., shirking.\footnote{In a similar context, \citet{res98:altugmiller} impose such finite dependence by assuming that wages and the utility cost from work only depend on a finite employment history. Our example would display single action ($K$) one-period dependence on initial choices in state $x_1$ and two-period dependence in state $x_2$ if shirking women would for sure see their experience drop by one level. Note that this would still not suffice to reduce the moment condition to \citeauthor{ecta02:magnacthesmar}'s linear moment condition.}  In particular, experience is not single action ($K$) one-period dependent on initial choices in states $x_1$ and $x_2$ and \citeauthor{ecta02:magnacthesmar}'s current value restriction and linear moment condition do not hold. Nevertheless, this example's monotonicity ensures that the discount factor is point identified. Because working, compared to shirking, affects the experience of a learning worker more than that of a novice with nothing to lose ($[\mathbf{Q}_1(x_2)-\mathbf{Q}_K(x_2)- \mathbf{Q}_1(x_1)+\mathbf{Q}_K(x_1)]=$ \experienceInlineDeltaQ) and $\mathbf{Q}_K^r\mathbf{m}$ is increasing for all $r$, \eqref{eq:rankmonotone} holds. Consequently, the moment condition is monotone in $\beta$ and has only one solution, $\experienceStructurebeta$.
\end{example}

\subsection{Extension to nonstationary models}
\label{ss:finitehorizon}

Our analysis extends to nonstationary models, such as that in \citeauthor{jpe97:keanewolpin},  with minor modifications. In fact, nonstationary models offer useful identification strategies that are not available for stationary models. Unlike in stationary models, an assumption of stationary utilities has identifying power in nonstationary models. A common version of this argument is that the utilities can be identified in the last period, say $T$, so that the discount factor is subsequently identified in the next to last period \citep[e.g.][]{yaoetal12}. This argument assumes stationary utilities, which can be cast as an exclusion restriction on time as a state variable, i.e. $u_{i,T-1}(\tilde x) = u_{i, T}(\tilde x)$, where time shifts the continuation values without shifting the primitive utilities. 

\cite{qme2016:bajarietal} used the assumption of stationary utilities to formally establish identification in a finite-horizon optimal stopping model. Theorem \ref{th:finitehorizon} below extends \citeauthor{qme2016:bajarietal}'s result beyond optimal stopping problems and also allows for identification of models with nonstationary utilities.\footnote{\citeauthor{yaoetal12} showed identification of the discount factor in a dynamic model with continuous controls under the assumption of stationary utilities and conjectured a similar result for discrete controls. Theorem \ref{th:finitehorizon} proves its conjecture.} 

Denote time by $t\in\{1,2,\ldots,T\}$, with terminal period $T<\infty$, and index $u^*_{k,t}$,  $\mathbf{u}_{k,t}$, $\mathbf{m}_t$, and $\mathbf{v}_{k,t}$  by time. For ease of exposition, we maintain the assumption of stationary Markov transition matrices $\mathbf{Q}_k$, but the results extend to nonstationary distributions. The choice-$k$ specific values now satisfy
\begin{align}
\label{eq:vkt}
	\mathbf{v}_{k,t}& = \mathbf{u}_{k,t} + \beta \mathbf{Q}_k \left[\mathbf{m}_{t+1}+\mathbf{v}_{K,t+1}\right]
\end{align}
for $t=1,\ldots,T-1$; with terminal condition $\mathbf{v}_{k,T}=\mathbf{u}_{k,T}$. With the normalization $\mathbf{u}_{K,t}=\mathbf{0}$ for all $t$, this gives
\begin{align}\label{eq:FHreducedform}
\ln \left(p_{k,t}(\tilde{x})\right)-\ln \left(p_{K,t}(\tilde{x})\right) & = 
u^*_{k,t}(\tilde x) + \beta[\mathbf{Q}_k(\tilde x)- \mathbf{Q}_K(\tilde x)][\mathbf{m}_{t+1} + \mathbf{v}_{K,t+1}] 
\end{align}
for all $k\in {\cal D}\backslash\{K\}$ and $\tilde x \in {\cal X}$. Finally, using \eqref{eq:vkt} and the normalization $\mathbf{u}_{K,t}=\mathbf{0}$ for all $t$, we can write the value of the reference choice $K$ as
\begin{align} 
\label{eq:vcapKt}
\mathbf{v}_{K,t} = \sum^{T}_{\tau = t+1} (\beta \mathbf{Q}_K)^{\tau- t}\mathbf{m}_{\tau},
\end{align}

\noindent where we use the convention that $\sum_{\tau=T+1}^T\cdot=0$ (so that indeed $\mathbf{v}_{K,T}=\mathbf{u}_{K,T}=0$). 

\begin{theorem}\label{th:finitehorizon}
Suppose that 
\begin{align}\label{eq:FHexclusionrestriction}
u^*_{k,t}(\tilde x_1) & = u^*_{l,t'}(\tilde x_2)
\end{align}
for $k\in{\cal D}/\{K\}$, $l \in {\cal D}$, $\tilde x_1 \in {\cal X}$, $\tilde x_2\in{\cal X}$, $1\leq t'<T$, and $t'\leq t\leq T$; with either $k \neq l$, or $\tilde x_1 \neq \tilde x_2$, or $t' < t$, or a combination of the three.  If either 
$p_{k,t}(\tilde x_1)/p_{K,t}(\tilde x_1) \not= p_{l,t'}(\tilde{x}_2)/p_{K,t'}(\tilde x_2)$ or  
\begin{align}\label{eq:FHregularitycondition}
[\mathbf{Q}_k(\tilde x_1)- \mathbf{Q}_K(\tilde x_1)]\mathbf{m}_{t+1}  -
[\mathbf{Q}_l(\tilde x_2) -  \mathbf{Q}_K(\tilde x_2)]\mathbf{m}_{t'+1} \neq 0,
\end{align}
then there are no more than $T-t'$ points in the identified set. 
\end{theorem}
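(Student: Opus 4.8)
The plan is to reduce the exclusion restriction \eqref{eq:FHexclusionrestriction} to a single polynomial equation in $\beta$ and then bound its number of roots, mirroring the logic behind Theorems \ref{th:ident} and \ref{th:finitedependence} but exploiting that the finite horizon truncates the relevant series into a genuine polynomial of known degree. First I would evaluate the reduced form \eqref{eq:FHreducedform} at $(k,t,\tilde x_1)$ and at $(l,t',\tilde x_2)$, difference the two, and invoke \eqref{eq:FHexclusionrestriction} to cancel the primitive utilities; the case $l=K$ is covered automatically, since both sides of the second copy of \eqref{eq:FHreducedform} then vanish identically. This yields a moment condition whose left hand side is the known scalar $\ln\!\left(p_{k,t}(\tilde x_1)/p_{K,t}(\tilde x_1)\right)-\ln\!\left(p_{l,t'}(\tilde x_2)/p_{K,t'}(\tilde x_2)\right)$ and whose right hand side is $\beta[\mathbf{Q}_k(\tilde x_1)-\mathbf{Q}_K(\tilde x_1)][\mathbf{m}_{t+1}+\mathbf{v}_{K,t+1}]-\beta[\mathbf{Q}_l(\tilde x_2)-\mathbf{Q}_K(\tilde x_2)][\mathbf{m}_{t'+1}+\mathbf{v}_{K,t'+1}]$.

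The key step is to substitute \eqref{eq:vcapKt} and collect powers of $\beta$. Since $\mathbf{m}_{s+1}+\mathbf{v}_{K,s+1}=\sum_{\tau=s+1}^{T}(\beta\mathbf{Q}_K)^{\tau-s-1}\mathbf{m}_\tau$, the right hand side becomes $\sum_{\tau=t+1}^{T}\beta^{\tau-t}[\mathbf{Q}_k(\tilde x_1)-\mathbf{Q}_K(\tilde x_1)]\mathbf{Q}_K^{\tau-t-1}\mathbf{m}_\tau-\sum_{\tau=t'+1}^{T}\beta^{\tau-t'}[\mathbf{Q}_l(\tilde x_2)-\mathbf{Q}_K(\tilde x_2)]\mathbf{Q}_K^{\tau-t'-1}\mathbf{m}_\tau$, a polynomial in $\beta$ with no constant term. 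Because $t\geq t'$, the highest power appearing is $\beta^{T-t'}$ (from the second sum at $\tau=T$), so moving the constant left hand side over defines a polynomial $g(\beta)$ of degree at most $T-t'$ whose zero set in $[0,1)$ contains the identified set.

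It then remains to rule out $g\equiv 0$, and I would do so by reading off its first two coefficients. The constant term of the right hand side is zero, so $g(0)$ equals minus the left hand side; and the only terms carrying $\beta^1$ are $\tau=t+1$ in the first sum and $\tau=t'+1$ in the second (every other $\tau$ raises the power), so the coefficient of $\beta^1$ is exactly $[\mathbf{Q}_k(\tilde x_1)-\mathbf{Q}_K(\tilde x_1)]\mathbf{m}_{t+1}-[\mathbf{Q}_l(\tilde x_2)-\mathbf{Q}_K(\tilde x_2)]\mathbf{m}_{t'+1}$, the left hand side of \eqref{eq:FHregularitycondition}. Hence if the left hand side of the moment condition is nonzero then $g(0)\neq0$, while if \eqref{eq:FHregularitycondition} holds then $g'(0)\neq0$; either way $g$ is a nonzero polynomial of degree at most $T-t'$, so it has at most $T-t'$ real roots and a fortiori at most $T-t'$ solutions in $[0,1)$. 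As in Theorem \ref{th:ident}, every discount factor in the identified set solves this moment condition, so the identified set contains no more than $T-t'$ points.

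The main obstacle I anticipate is purely the bookkeeping of aligning powers of $\beta$ with the time indices so that the degree is pinned at $T-t'$ rather than overstated, together with confirming that no cancellation forces $g\equiv0$; the conceptual content is light once one sees that finiteness of the horizon replaces the real-analyticity argument of Theorem \ref{th:ident} by an exact polynomial degree bound, the same mechanism that yielded the $\rho$-point bound in Theorem \ref{th:finitedependence}.
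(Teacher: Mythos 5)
Your proposal is correct and follows essentially the same route as the paper: difference the reduced form \eqref{eq:FHreducedform} under the exclusion restriction, substitute \eqref{eq:vcapKt} to obtain a polynomial moment condition of degree at most $T-t'$ in $\beta$, rule out the identically-zero polynomial by checking the constant term (the left hand side) and the linear coefficient (the rank condition \eqref{eq:FHregularitycondition}), and invoke the fundamental theorem of algebra. Your explicit power-of-$\beta$ bookkeeping and the remark that the case $l=K$ degenerates harmlessly are just more detailed renderings of steps the paper leaves implicit.
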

\begin{proof}
Differencing \eqref{eq:FHreducedform} corresponding to \eqref{eq:FHexclusionrestriction} and substituting in \eqref{eq:vcapKt} gives 
\begin{align}\label{eq:timehomogeneity}
\ln \left(p_{k,t}(\tilde x_1)/p_{K,t}(\tilde x_1)\right) &- \ln \left(p_{l,t'}(\tilde{x}_2)/p_{K,t'}(\tilde x_2)\right)  = \\
\beta \Bigg(&\left[\mathbf{Q}_k(\tilde x_1)- \mathbf{Q}_K(\tilde x_1)\right]\Big[\sum^{T}_{\tau = t+1} (\beta \mathbf{Q}_K)^{\tau- t-1}\mathbf{m}_{\tau} \Big]  - \nonumber\\
&\left[\mathbf{Q}_l(\tilde x_2) -  \mathbf{Q}_K(\tilde x_2)\right] \Big[ \sum^{T}_{\tau = t'+1} (\beta \mathbf{Q}_K)^{\tau- t'-1}\mathbf{m}_{\tau}\Big]\Bigg). \nonumber
\end{align}
For given choice and transition probabilities, the right hand side of \eqref{eq:timehomogeneity} minus its left hand side is a polynomial of order $T-t'$ in $\beta$. If this polynomial is nonconstant, then by the fundamental theorem of algebra, it has has up to $T-t'$ real roots, which is an upper bound on the number of points in the identified set. To show that  \eqref{eq:timehomogeneity} is nonconstant under the stated assumptions, note first that the right hand side of \eqref{eq:timehomogeneity} is zero at $\beta = 0$. If the left hand side is nonzero, the polynomial is nonconstant. If the left hand side is zero, then the rank condition \eqref{eq:FHregularitycondition} ensures that the derivative of the right hand side is nonzero at $\beta=0$, so that the right hand side, and thus the polynomial, is nonconstant. 
\end{proof}
Rank condition \eqref{eq:FHregularitycondition} adapts \eqref{eq:rank} to the nonstationary case. Unlike the stationary dynamic choice problem, the nonstationary problem does not require that the discount factor lies in $[0,1)$. We leave the definition of the domain of the discount factor to the reader. 

In a study of identification in nonstationary models, \cite{arcidiaconomillerlongshort} distinguished between identification in long panels, which include the terminal period, and short panels, which do not. In general, Theorem \ref{th:finitehorizon} requires long panels. However, for models with $\rho$-period dependence, it also applies to short panels that extend to at least period $t+\rho$. For instance, in Zurcher's renewal problem with a finite horizon, mileage is still single action ($K$) one-period dependent,  so that the discount factor can be point identified in short panels until period $t+1$.

\section{Empirical content}\label{s:empiricalcontent}

The previous section focused on identification and gave conditions under which the primitives can be recovered from the data. In applications, we need to entertain the possibility that the model is misspecified and did not generate the data to begin with. It is well known that the unrestricted model has no empirical content: It can rationalize any choice data $\{\mathbf{p}_k,\mathbf{Q}_k;k\in{\cal D}\}$. This section shows that the model under exclusion restrictions can be rejected by data. 

The standard result for the unrestricted stationary model follows from a version of \citeauthor{ecta02:magnacthesmar}'s Proposition 2: For any given data $\{\mathbf{p}_k,\mathbf{Q}_k;k\in{\cal D}\}$, $\mathbf{u}_K=\mathbf{0}$, and $\beta\in[0,1)$, there exists a unique set of primitive utilities $\{\mathbf{u}_k,k\in{\cal D}/\{K\}\}$ that rationalizes the data. Specifically, $\mathbf{m}=-\ln\mathbf{p}_K$. Then, $\mathbf{v}_K$ follows from $\mathbf{u}_K=\mathbf{0}$ and \eqref{eq:vK}. Next, by \eqref{eq:choiceprobinversion},  $\mathbf{v}_k=\mathbf{v}_K+\ln\mathbf{p}_k-\ln\mathbf{p}_K$ for $k\in{\cal D}/\{K\}$ ensures that the value functions are compatible with the choice probability data. In turn, by \eqref{eq:bellmanprobs}, these value functions are uniquely generated by the primitive utilities $\mathbf{u}_k=\mathbf{v}_k-\beta \mathbf{Q}_k\left[\mathbf{m}+\mathbf{v}_K\right]$ for $k\in{\cal D}/\{K\}$ (note that $\mathbf{v}_K$ was already set to be consistent with $\mathbf{u}_K=\mathbf{0}$). 

This result justifies our focus on the identification of the discount factor $\beta$ in the previous section: 
Once the discount factor is identified, we can find unique primitive utilities that rationalize the data. 
The empirical consequences of a violation of the assumed exclusion restriction can manifest themselves in two distinct ways. 

First, in some cases, it may be possible to find primitives that satisfy the false exclusion restriction. If so, these primitives will in general not equal the true primitives. Because we can find primitive utilities that rationalize the data for any discount factor, the data can be of no help in determining the right restriction in this case. Instead, we need to argue for the identifying assumption on other grounds. 

Second, there may not exist discount factors in their domain that are compatible with the data under the assumed exclusion restriction. The subset of the possible data that can be rationalized under an exclusion restriction can be very small. For instance, in a binary choice model with $J = 2$ and  $u^*_1(x_1) = u^*_1(x_2)$, the model cannot generate any state-dependent value contrasts. It follows that this model cannot rationalize any state-dependent choice data. In empirical practice, this may force parameter estimates to lie outside their theoretical domains. In turn, this may lead researchers to statistically reject the model and conclude that at least one of its assumptions is violated. While some solution methods, such as typical nested fixed point algorithms, impose the restriction that $\beta \in [0,1)$, it is easy to use the moment conditions in \eqref{eq:diffAD} for model testing as their computation do not restrict the values $\beta$ can take.

The empirical content of the identified model also gives some scope to test nonnested identifying assumptions against each other. For example, the data in Example \ref{ex:needNoMT} cannot be rationalized under \citeauthor{ecta02:magnacthesmar}'s current value restriction, but are consistent with an exclusion restriction on primitive utility.  Conversely, it is easy to construct data that are inconsistent with the primitive utility restriction, yet can be rationalized by primitives that satisfy the current value restriction.
\begin{example}
\label{ex:moreEmpiricalContent}
Figure \ref{fig:moreEmpiricalContent} displays the left and right hand sides of \citeauthor{ecta02:magnacthesmar}'s moment condition in \eqref{eq:diffMT} and ours in \eqref{eq:diffAD}. There is a $\beta\in[0,1)$ that solves  \eqref{eq:diffMT}, but the moment condition in  \eqref{eq:diffAD} cannot  be met. Intuitively, the increasingly negative contribution of the second (value of choice $K$) term in the right hand side of \eqref{eq:diffAD} limits the possible log choice probability ratio response to the change in states to a level below the observed response. 
\end{example}
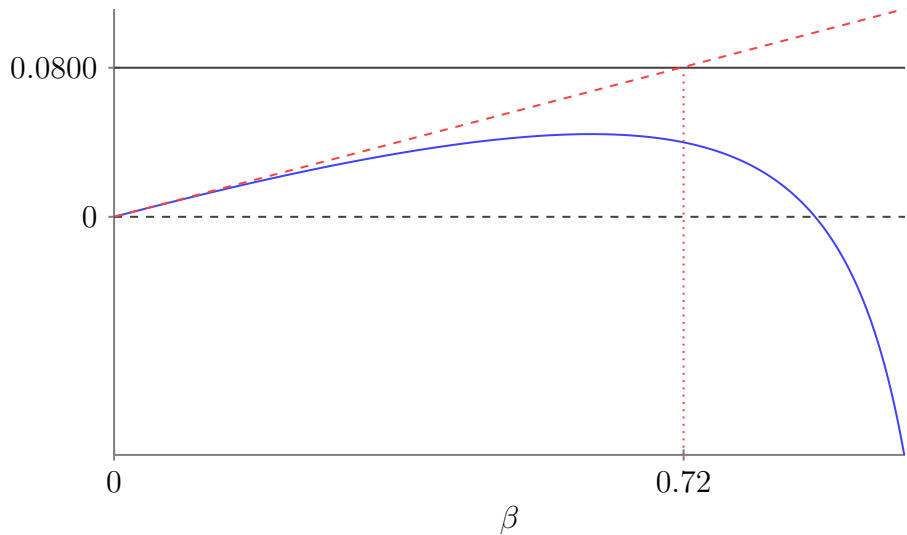
\begin{figure}[t]
\caption{Example of Data that are Consistent with an Exclusion Restriction on Current Values but Not with One on Primitive Utility\label{fig:moreEmpiricalContent}}
\medskip
\centering
\begin{tikzpicture}
   \begin{axis}[width=\figwidth,height=\figheight,
        name=llherr,
        xlabel={$\beta$},
        axis x line=bottom,
        every outer x axis line/.append style={-,color=gray,line width=0.75pt},
        axis y line=left,
        every outer y axis line/.append style={-,color=gray,line width=0.75pt},
        xtick={0,\moreEmpiricalContentPreciseBetaMT},
        xticklabels={$0$,$\moreEmpiricalContentBetaMT$},
        ytick={0,\moreEmpiricalContentPreciseLhs},
        yticklabels={$0$,$\moreEmpiricalContentLhs$},
        every tick/.append style={line width=0.75pt},
	xmin=0, 
	xmax=1,
	ymin=\moreEmpiricalContentMinRhs,
        scaled ticks=false,
        /pgf/number format/precision=2,
        /pgf/number format/set thousands separator={}]
        \draw[color=black!75,line width=0.75pt,solid] (axis cs:0,\moreEmpiricalContentPreciseLhs) -- (axis cs:1,\moreEmpiricalContentPreciseLhs);
        \draw[color=black!75,line width=0.75pt,dashed] (axis cs:0,0) -- (axis cs:1,0);
        \addplot[color=blue!75,smooth,mark=,line width=0.75pt] table[x=beta,y=rhs,col sep=comma]{matlab/data/moreEmpiricalContent.csv} ;
        	\addplot[color=red!75,dashed,mark=,line width=0.75pt] table[x=beta,y=rhsMT,col sep=comma]{matlab/data/moreEmpiricalContent.csv} ;
        	\draw[color=red!75,line width=0.75pt,dotted] (axis cs:\moreEmpiricalContentPreciseBetaMT,\moreEmpiricalContentMinRhs) -- (axis cs:\moreEmpiricalContentPreciseBetaMT,\moreEmpiricalContentPreciseLhs);
\end{axis}
\end{tikzpicture}
\medskip
\begin{minipage}{\figwidth}
{\scriptsize Note: For $J=\moreEmpiricalContentJ$ states, $K=\moreEmpiricalContentK$ choices, $k=l=1$, $\tilde{x}_1=x_1$, and $\tilde{x}_2=x_2$, this graph plots the left hand side of  (\ref{eq:diffMT})  and (\ref{eq:diffAD}) (solid black horizontal line) and the right hand sides of (\ref{eq:diffMT})  (dashed red line) and (\ref{eq:diffAD}) (solid blue curve) as functions of $\beta$. The data are $\mathbf{Q}_1(\tilde{x}_1)=\moreEmpiricalContentQiRowOne$, $\mathbf{Q}_1(\tilde{x}_2)=\moreEmpiricalContentQiRowTwo$,}

{\scriptsize 
\vspace*{-3ex} 
\[\mathbf{Q}_K=\moreEmpiricalContentQK\text{, }\mathbf{p}_1=\moreEmpiricalContentPOne\text{, and }\mathbf{p}_K=\moreEmpiricalContentPK.\] 

Consequently, the left hand side of (\ref{eq:diffMT})  and (\ref{eq:diffAD}) equals $\ln\left(p_1(x_1)/p_K(x_1)\right) - \ln\left(p_1(x_2)/p_K(x_2)\right)= \moreEmpiricalContentLhs$. Moreover,  $\mathbf{m}'=\moreEmpiricalContentMPrime$ and $\mathbf{Q}_1(x_1)-\mathbf{Q}_K(x_1)- \mathbf{Q}_1(x_2)+\mathbf{Q}_K(x_2)=\moreEmpiricalContentDeltaQ$, so that the slope of the dashed red line equals $\left[\mathbf{Q}_1(x_1)-\mathbf{Q}_K(x_1)- \mathbf{Q}_1(x_2)+\mathbf{Q}_K(x_2)\right] \mathbf{m}=\moreEmpiricalContentDeltaQM$. A unique value of $\beta$,  $\moreEmpiricalContentBetaMT$, solves  (\ref{eq:diffMT}), but (\ref{eq:diffAD}) has no solution.}
\end{minipage}
\end{figure}
\noindent In practice, we can easily establish whether given data are consistent with one exclusion restriction or the other by verifying whether the corresponding moment condition, \eqref{eq:diffMT} or \eqref{eq:diffAD}, or its empirical analog has a solution $\beta\in[0,1)$. We can formally test either exclusion restriction with a test of the null hypothesis that $\beta \in[0,1)$. 

Finally, the empirical content of the nonstationary model depends on the chosen domain of the discount factor. Therefore, we limit our discussion of this model's empirical content to noting that Theorem \ref{th:finitehorizon} does not guarantee a real root (and less so one in a specified domain for $\beta$) for general choice and state probabilities.

\section{Multiple exclusion restrictions and inference}
\label{s:multiple}

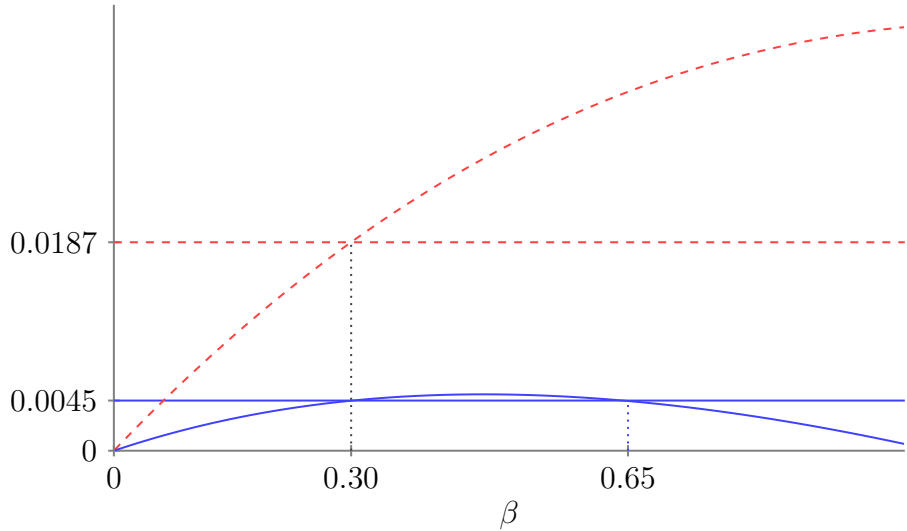
\begin{figure}[t]
\caption{Example with Two Moment Conditions of Which One Identifies the Discount Factor\label{fig:oneRedundant}}
\centering
\medskip
\begin{tikzpicture}
   \begin{axis}[width=\figwidth,height=\figheight,
        name=llherr,
        xlabel={$\beta$},
        axis x line=bottom,
        every outer x axis line/.append style={-,color=gray,line width=0.75pt},
        axis y line=left,
        every outer y axis line/.append style={-,color=gray,line width=0.75pt},
        xtick={0,\oneRedundantOnePreciseBetaOne,\oneRedundantTwoPreciseBetaTwo},
        xticklabels={$0$,$\oneRedundantOneBetaOne$,$\oneRedundantTwoBetaTwo$},
        ytick={0,\oneRedundantOnePreciseLhs,\oneRedundantTwoPreciseLhs},
        yticklabels={$0$,$\oneRedundantOneLhs$,$\oneRedundantTwoLhs$},
        every tick/.append style={line width=0.75pt},
	xmin=0, 
	xmax=1,
	ymin=0,
	ymax=0.04,
        scaled ticks=false,
        /pgf/number format/precision=2,
        /pgf/number format/set thousands separator={}]
        \draw[color=red!75,line width=0.75pt,dashed] (axis cs:0,\oneRedundantOnePreciseLhs) -- (axis cs:1,\oneRedundantOnePreciseLhs);
        \draw[color=blue!75,line width=0.75pt,solid] (axis cs:0,\oneRedundantTwoPreciseLhs) -- (axis cs:1,\oneRedundantTwoPreciseLhs);
        	\draw[color=black!75,line width=0.75pt,dotted] (axis cs:\oneRedundantOnePreciseBetaOne,0) -- (axis cs:\oneRedundantOnePreciseBetaOne,\oneRedundantOnePreciseLhs);
        	\draw[color=blue!75,line width=0.75pt,dotted] (axis cs:\oneRedundantTwoPreciseBetaTwo,0) -- (axis cs:\oneRedundantTwoPreciseBetaTwo,\oneRedundantTwoPreciseLhs);
        \addplot[color=red!75,smooth,mark=,line width=0.75pt,dashed] table[x=beta,y=oneRhs,col sep=comma]{matlab/data/oneRedundant.csv};
        \addplot[color=blue!75,smooth,mark=,line width=0.75pt,solid] table[x=beta,y=twoRhs,col sep=comma]{matlab/data/oneRedundant.csv};
\end{axis}
\end{tikzpicture}
\medskip

\begin{minipage}{\figwidth}
{\scriptsize Note:  For $J=\oneRedundantJ$ states, $K=\oneRedundantK$ choices, and $k=l=1$, this graph plots the left (horizontal lines) and right hand sides (curves) of (\ref{eq:diffAD}) as functions of $\beta$, for $\tilde x_1=x_1$ and $\tilde x_2=x_2$ (corresponding to $u_1(x_1) = u_1(x_2)$; dashed red) and $\tilde x_1=x_3$ and $\tilde x_2=x_4$ (corresponding to $u_1(x_3) = u_1(x_4)$; solid blue). The data are}

{\scriptsize 
\vspace*{-3ex} 
\[\mathbf{Q}_1 = \oneRedundantQi, \mathbf{Q}_K=\oneRedundantQK\text{, }\]
\[\mathbf{p}_1^\prime=\oneRedundantPOne \text{, and }\mathbf{p}_K^\prime=\oneRedundantPK.\] 

Consequently, the left hand sides of (\ref{eq:diffAD}) equal $\ln\left(p_1(x_1)/p_K(x_1)\right) - \ln\left(p_1(x_2)/p_K(x_2)\right)= \oneRedundantOneLhs$ and $\ln\left(p_1(x_3)/p_K(x_3)\right) - \ln\left(p_1(x_4)/p_K(x_4)\right)= \oneRedundantTwoLhs$. A unique value of $\beta$, $\oneRedundantOneBetaOne$, solves (\ref{eq:diffAD}) for $\tilde x_1=x_1$ and $\tilde x_2=x_2$  (dashed red). Two values of $\beta$ solve  (\ref{eq:diffAD}) for $\tilde x_1=x_3$ and $\tilde x_2=x_4$ (solid blue), of which one coincides with the solution to the first moment condition.}
\end{minipage}
\end{figure}

Often, more than one exclusion restriction is available. In particular, economic intuition for an exclusion restriction across states typically suggests the exclusion of a state {\em variable} from the utility function. For example, the state variable $x$ can be partitioned as $(y,z)$, where $z$ does not affect utilities:  $u_k(\tilde y,\tilde z_1)=u_k(\tilde y,\tilde z_2)$ for all $k\in{\cal D}/\{K\}$, $\tilde y$, $\tilde z_1$, and $\tilde z_2>\tilde z_1$.\footnote{We provide a more formal statement of the exclusion of state variables in our discussion of \citeauthor{ier15:fangwang} in \citet{fw19:abbringdaljord}.} This typically gives multiple exclusion restrictions like \eqref{eq:exclusionrestriction}. For example,  if choices, $y$, and $z$ are all binary, we have two exclusion restrictions, one for each possible value of $y$. 

With multiple exclusion restrictions, point identification can be obtained even if each individual moment condition set identifies $\beta$. We give two examples of identification with two exclusion restrictions.
\begin{example}
\label{ex:oneRedundant}
In Figure \ref{fig:oneRedundant}, the moment condition represented by the solid blue line and curve and the one in red dashes have two and one solutions, respectively. Both moment conditions are consistent with a discount factor of $\oneRedundantOneBetaOne$, while the solid moment condition is also consistent with a discount factor of $\oneRedundantTwoBetaTwo$. The dashed moment condition by itself point identifies the discount factor, while the solid moment condition only set identifies it. In this case, the solid moment condition is redundant for point identification. 
\end{example}
\begin{example}
\label{ex:twiceTwo}
In Figure \ref{fig:twiceTwo}, the dashed red moment condition holds for discount factors $\twiceTwoOneBetaOne$ and $\twiceTwoOneBetaTwo$, while the solid blue moment condition is solved by discount factors $\twiceTwoTwoBetaOne$ and $\twiceTwoTwoBetaTwo$. Each individual moment condition is consistent with two discount factors, but only one discount factor solves both moment conditions. 
\end{example}
With choice and transition probabilities generated from a model that satisfies two (or more) exclusion restrictions, the implied two (or more) moment conditions will always share one solution, the discount factor that was used to generate the data. We conjecture that, generically, the moments will not share any further solutions, because different choice and transition probabilities, which vary freely with the primitive utilities, enter the various moment conditions.  

Generic point identification is of limited practical value in our context. First, we are not able to a priori characterize the subset of the model space on which point identification fails in terms of economic concepts. Though this subset is small, it may, for all we know, contain economically important models.\footnote{For example,  \cites{jpe04:ekelandetal} generic identification result for the hedonic model is particularly instructive because it shows that identification fails exactly for the linear-quadratic special case that is at the center of most applied work.}  

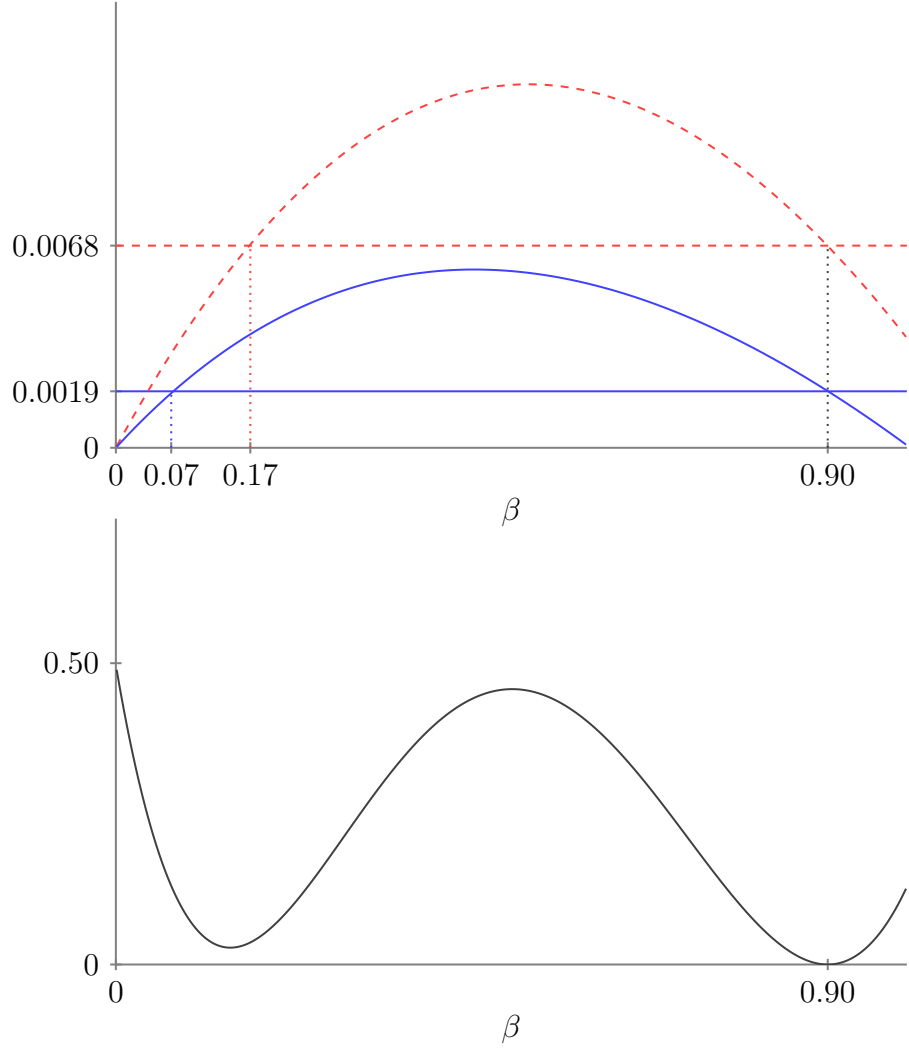
\begin{figure}[t]
\caption{Example with Two Moment Conditions that Jointly Identify the Discount Factor but Individually Do Not\label{fig:twiceTwo}}
\centering
\medskip
\begin{tikzpicture}
   \begin{axis}[width=\figwidth,height=\figheight,
        name=mom,
        xlabel={$\beta$},
        axis x line=bottom,
        every outer x axis line/.append style={-,color=gray,line width=0.75pt},
        axis y line=left,
        every outer y axis line/.append style={-,color=gray,line width=0.75pt},
        xtick={0,\twiceTwoOnePreciseBetaOne,\twiceTwoOnePreciseBetaTwo,\twiceTwoTwoPreciseBetaOne},
        xticklabels={$0$,$\twiceTwoOneBetaOne$,$\twiceTwoOneBetaTwo$,$\twiceTwoTwoBetaOne$},
        ytick={0,\twiceTwoOnePreciseLhs,\twiceTwoTwoPreciseLhs},
        yticklabels={$0$,$\twiceTwoOneLhs$,$\twiceTwoTwoLhs$},
        every tick/.append style={line width=0.75pt},
	xmin=0, 
	xmax=1,
	ymin=0,
	ymax=0.015,
        scaled ticks=false,
        /pgf/number format/precision=2,
        /pgf/number format/set thousands separator={}]
        \draw[color=red!75,line width=0.75pt,dashed] (axis cs:0,\twiceTwoOnePreciseLhs) -- (axis cs:1,\twiceTwoOnePreciseLhs);
        \draw[color=blue!75,line width=0.75pt,solid] (axis cs:0,\twiceTwoTwoPreciseLhs) -- (axis cs:1,\twiceTwoTwoPreciseLhs);
        	\draw[color=black!75,line width=0.75pt,dotted] (axis cs:\twiceTwoOnePreciseBetaTwo,0) -- (axis cs:\twiceTwoOnePreciseBetaTwo,\twiceTwoOnePreciseLhs);
        	\draw[color=red!75,line width=0.75pt,dotted] (axis cs:\twiceTwoOnePreciseBetaOne,0) -- (axis cs:\twiceTwoOnePreciseBetaOne,\twiceTwoOnePreciseLhs);
        	\draw[color=blue!75,line width=0.75pt,dotted] (axis cs:\twiceTwoTwoPreciseBetaOne,0) -- (axis cs:\twiceTwoTwoPreciseBetaOne,\twiceTwoTwoPreciseLhs);
        \addplot[color=red!75,smooth,mark=,line width=0.75pt,dashed] table[x=beta,y=oneRhs,col sep=comma]{matlab/data/twiceTwo.csv};
        \addplot[color=blue!75,smooth,mark=,line width=0.75pt,solid] table[x=beta,y=twoRhs,col sep=comma]{matlab/data/twiceTwo.csv};
\end{axis}

\path(mom.origin) ++(0,-2.7in) coordinate (crit plot position);
\begin{axis}[width=\figwidth,height=\figheight,
        name=crit,
        at = {(crit plot position)},
        xlabel={$\beta$},
        axis x line=bottom,
        every outer x axis line/.append style={-,color=gray,line width=0.75pt},
        axis y line=left,
        every outer y axis line/.append style={-,color=gray,line width=0.75pt},
        xtick={0,\twiceTwoOnePreciseBetaTwo},
        xticklabels={$0$,$\twiceTwoOneBetaTwo$},
        ytick={0,0.5},
        yticklabels={$0$,$0.50$},
        every tick/.append style={line width=0.75pt},
	xmin=0, 
	xmax=1,
	ymin=0,
	ymax=0.74,
        scaled ticks=false,
        /pgf/number format/precision=2,
        /pgf/number format/set thousands separator={}]
        \addplot[color=black!75,smooth,mark=,line width=0.75pt,solid] table[x=beta,y=quadDist,col sep=comma]{matlab/data/twiceTwo.csv};             
\end{axis}
\end{tikzpicture}
\medskip

\begin{minipage}{\figwidth}
{\scriptsize 
Note: For $J=\twiceTwoJ$ states, $K=\twiceTwoK$ choices, and $k=l=1$, the graph in the top panel plots the left (horizontal lines) and right hand sides (curves) of (\ref{eq:diffAD}) as functions of $\beta$, for $\tilde x_1=x_1$ and $\tilde x_2=x_2$ (corresponding to $u_1(x_1) = u_1(x_2)$; dashed red) and $\tilde x_1=x_3$ and $\tilde x_2=x_4$ (corresponding to $u_1(x_3) = u_1(x_4)$; solid blue). The graph in the bottom panel plots the corresponding squared Euclidian distance between the left and right hand sides of (\ref{eq:diffAD}) as a function of $\beta$ (in multiples of $10^{-\twiceTwoNoiseDistScaler}$). The data are}

{\scriptsize 
\vspace*{-3ex}
\[\mathbf{Q}_1 = \twiceTwoQi,~ \mathbf{Q}_K=\twiceTwoQK,\]
\[\mathbf{p}_1^\prime=\twiceTwoPOne\text{, and }\mathbf{p}_K^\prime=\twiceTwoPK.\]

Consequently, the left hand sides of (\ref{eq:diffAD}) equal $\ln\left(p_1(x_1)/p_K(x_1)\right) - \ln\left(p_1(x_2)/p_K(x_2)\right)= \twiceTwoOneLhs$ and $\ln\left(p_1(x_3)/p_K(x_3)\right) - \ln\left(p_1(x_4)/p_K(x_4)\right)= \twiceTwoTwoLhs$. A unique value of $\beta$, $\twiceTwoOneBetaTwo$, solves  (\ref{eq:diffAD}) for both $\tilde x_1=x_1$ and $\tilde x_2=x_2$ (dashed red) and $\tilde x_1=x_3$ and $\tilde x_2=x_4$ (solid blue). In addition, each of these two moment conditions has one other solution.}
\end{minipage}
\end{figure}

Second, we may not learn whether the discount factor is point or set identified in finite samples. While finding the shared solutions to multiple moment conditions is easy if we know the population choice and transition probabilities, locating the shared solutions in finite samples can be difficult due to sampling variation. This suggests that we do not insist on point identification, but accept set identification and use a consistent estimator of the identified set, which may contain one or more points. Set estimators are easy to implement for single parameter problems. We give one example.
\begin{example}
Suppose the population moment conditions are as given in Figure \ref{fig:twiceTwo}. Though each individual moment condition is equally consistent with one small discount factor, at $\twiceTwoTwoBetaOne$ and $\twiceTwoOneBetaOne$, respectively,  and one large discount factor at the true value of $\twiceTwoOnePreciseBetaTwo$, only the latter is a common solution to both moment conditions. The discount factor is therefore point identified in this population.  

\begin{figure}[t]
\caption{Example with Two Moment Conditions that Jointly Identify the Discount Factor but Individually Do Not, Using Noisy Estimates of the Choice Probabilities\label{fig:twiceTwoNoise}}
\centering
\medskip
\begin{tikzpicture}
   \begin{axis}[width=\figwidth,height=\figheight,
        name=noisyMom,
        xlabel={$\beta$},
        axis x line=bottom,
        every outer x axis line/.append style={-,color=gray,line width=0.75pt},
        axis y line=left,
        every outer y axis line/.append style={-,color=gray,line width=0.75pt},
        xtick={0,\twiceTwoNoiseOnePreciseBetaOne,\twiceTwoNoiseOnePreciseBetaTwo,\twiceTwoNoiseTwoPreciseBetaOne,\twiceTwoNoiseTwoPreciseBetaTwo},
        xticklabels={$0$,$\twiceTwoNoiseOneBetaOne$,$\twiceTwoNoiseOneBetaTwo$,$\twiceTwoNoiseTwoBetaOne$,$\twiceTwoNoiseTwoBetaTwo$},
        ytick={0,\twiceTwoNoiseOnePreciseLhs,\twiceTwoNoiseTwoPreciseLhs},
        yticklabels={$0$,$\twiceTwoNoiseOneLhs$,$\twiceTwoNoiseTwoLhs$},
        every tick/.append style={line width=0.75pt},
	xmin=0, 
	xmax=1,
	ymin=0,
	ymax=0.015,
        scaled ticks=false,
        /pgf/number format/precision=2,
        /pgf/number format/set thousands separator={}]
        \draw[color=red!75,line width=0.75pt,dashed] (axis cs:0,\twiceTwoNoiseOnePreciseLhs) -- (axis cs:1,\twiceTwoNoiseOnePreciseLhs);
        \draw[color=blue!75,line width=0.75pt,solid] (axis cs:0,\twiceTwoNoiseTwoPreciseLhs) -- (axis cs:1,\twiceTwoNoiseTwoPreciseLhs);
        	\draw[color=red!75,line width=0.75pt,dotted] (axis cs:\twiceTwoNoiseOnePreciseBetaOne,0) -- (axis cs:\twiceTwoNoiseOnePreciseBetaOne,\twiceTwoNoiseOnePreciseLhs);	
	\draw[color=red!75,line width=0.75pt,dotted] (axis cs:\twiceTwoNoiseOnePreciseBetaTwo,0) -- (axis cs:\twiceTwoNoiseOnePreciseBetaTwo,\twiceTwoNoiseOnePreciseLhs);
        	\draw[color=blue!75,line width=0.75pt,dotted] (axis cs:\twiceTwoNoiseTwoPreciseBetaOne,0) -- (axis cs:\twiceTwoNoiseTwoPreciseBetaOne,\twiceTwoNoiseTwoPreciseLhs);
        	\draw[color=blue!75,line width=0.75pt,dotted] (axis cs:\twiceTwoNoiseTwoPreciseBetaTwo,0) -- (axis cs:\twiceTwoNoiseTwoPreciseBetaTwo,\twiceTwoNoiseTwoPreciseLhs);
        \addplot[color=red!75,smooth,mark=,line width=0.75pt,dashed] table[x=beta,y=oneRhs,col sep=comma]{matlab/data/twiceTwoNoise.csv};
        \addplot[color=blue!75,smooth,mark=,line width=0.75pt,solid] table[x=beta,y=twoRhs,col sep=comma]{matlab/data/twiceTwoNoise.csv};
\end{axis}

\path(noisyMom.origin) ++(0,-2.7in) coordinate (noisyCrit plot position);
\begin{axis}[width=\figwidth,height=\figheight,
        name=noisyCrit,
        at = {(noisyCrit plot position)},
        xlabel={$\beta$},
        axis x line=bottom,
        every outer x axis line/.append style={-,color=gray,line width=0.75pt},
        axis y line=left,
        every outer y axis line/.append style={-,color=gray,line width=0.75pt},
        xtick={0,\twiceTwoNoisePreciseBetaSetOne,\twiceTwoNoisePreciseBetaSetTwo,\twiceTwoNoisePreciseBetaSetThree,\twiceTwoNoisePreciseBetaSetFour},
        xticklabels={$0$,$\twiceTwoNoiseBetaSetOne$,$\twiceTwoNoiseBetaSetTwo$,$\twiceTwoNoiseBetaSetThree$,$\twiceTwoNoiseBetaSetFour$},
        ytick={0,\twiceTwoNoisePreciseCriticalValue,0.5},
        yticklabels={$0$,$s_n$,$0.50$},
        every tick/.append style={line width=0.75pt},
	xmin=0, 
	xmax=1,
	ymin=0,
	ymax=0.74,
        scaled ticks=false,
        /pgf/number format/precision=2,
        /pgf/number format/set thousands separator={}]
        \draw[color=black!75,line width=0.75pt,solid] (axis cs:0,\twiceTwoNoisePreciseCriticalValue) -- (axis cs:1,\twiceTwoNoisePreciseCriticalValue);
        \addplot[color=black!75,smooth,mark=,line width=0.75pt,solid] table[x=beta,y=quadDist,col sep=comma]{matlab/data/twiceTwoNoise.csv};             
	\addplot [line width=0.75pt,color=green!75,mark=,fill=green, fill opacity=0.05]coordinates {
		(\twiceTwoNoisePreciseBetaSetOne, 0) 
		(\twiceTwoNoisePreciseBetaSetTwo, 0) 
            	(\twiceTwoNoisePreciseBetaSetTwo,\twiceTwoNoisePreciseCriticalValue)  
           	(\twiceTwoNoisePreciseBetaSetOne,\twiceTwoNoisePreciseCriticalValue)
                 (\twiceTwoNoisePreciseBetaSetOne, 0) };
	\addplot [line width=0.75pt,color=green!75,mark=,fill=green, fill opacity=0.05]coordinates {
		(\twiceTwoNoisePreciseBetaSetThree, 0) 
		(\twiceTwoNoisePreciseBetaSetFour, 0) 
            	(\twiceTwoNoisePreciseBetaSetFour,\twiceTwoNoisePreciseCriticalValue)  
           	(\twiceTwoNoisePreciseBetaSetThree,\twiceTwoNoisePreciseCriticalValue)
                 (\twiceTwoNoisePreciseBetaSetThree, 0) };
\end{axis}
\end{tikzpicture}
\medskip

\begin{minipage}{\figwidth}
{\scriptsize Note: This figure redraws Figure \ref{fig:twiceTwo} for the same values of $\mathbf{Q}_1$ and $\mathbf{Q}_K$, but randomly perturbed values of its choice probabilities $\mathbf{p}_1$ and $\mathbf{p}_K$. Rounded to two digits, the perturbed choice probabilities equal those reported below Figure \ref{fig:twiceTwo}. Consequently, the perturbation to $\mathbf{m}=-\ln\mathbf{p}_K$ is very small too, so that  the right hand sides of  (\ref{eq:diffAD}) are very close to those plotted in Figure \ref{fig:twiceTwo}. The left hand sides of (\ref{eq:diffAD}), however, now equal $\ln\left(p_1(x_1)/p_K(x_1)\right) - \ln\left(p_1(x_2)/p_K(x_2)\right)=\twiceTwoNoiseOneLhs$ (instead of $\twiceTwoOneLhs$) and $\ln\left(p_1(x_3)/p_K(x_3)\right) - \ln\left(p_1(x_4)/p_K(x_4)\right)= \twiceTwoNoiseTwoLhs$ (instead of $\twiceTwoTwoLhs$). The resulting moment conditions again have two solutions. However, they no longer share a common solution and the squared Euclidian distance in the bottom panel never attains zero. The green shaded areas highlight the intervals $[\twiceTwoNoiseBetaSetOne,\twiceTwoNoiseBetaSetTwo]$ and $[\twiceTwoNoiseBetaSetThree,\twiceTwoNoiseBetaSetFour]$ of values of $\beta$ at which the distance is below some critical level $s_n$ (which is taken to be $\twiceTwoNoiseCriticalValue\times 10^{-\twiceTwoNoiseDistScaler}$ in this example).}
\end{minipage}
\end{figure}
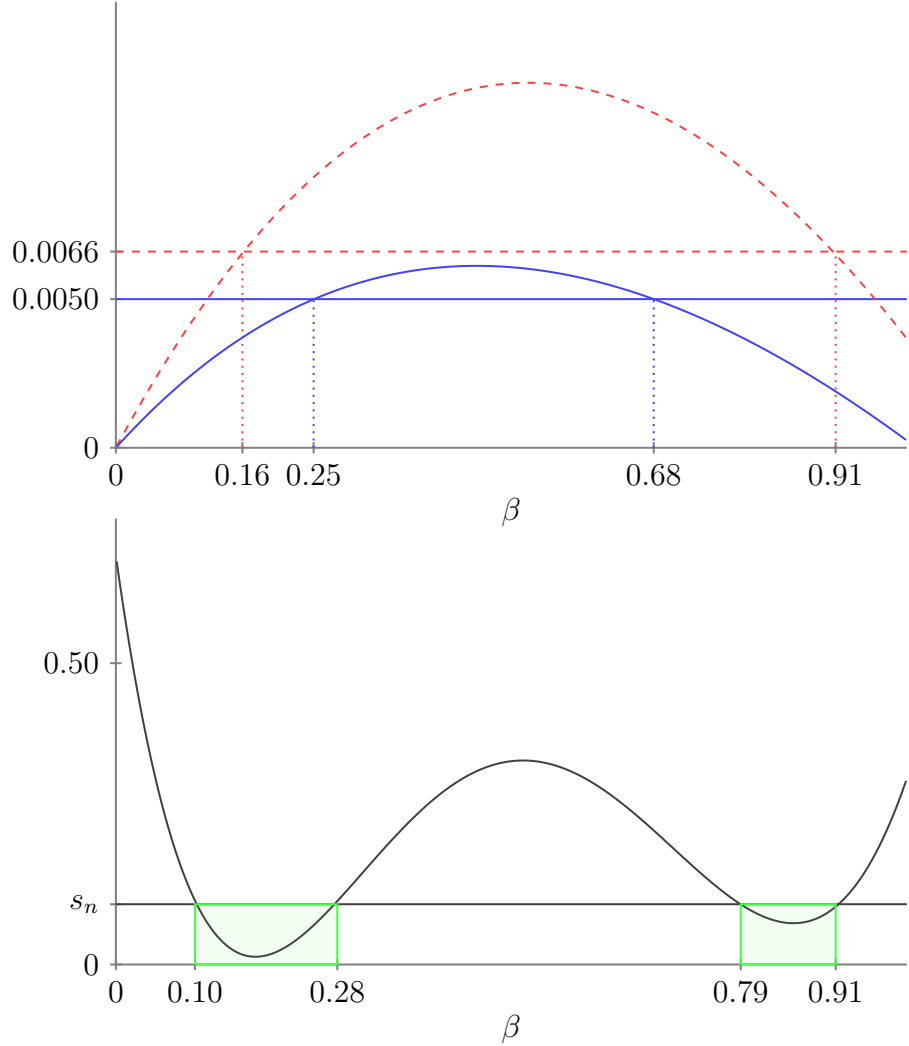

In the top panel of Figure \ref{fig:twiceTwoNoise}, the same two moment conditions are plotted with sampling variation in the choice data. One sample moment condition is solved by discount factors $\twiceTwoNoiseOnePreciseBetaOne$ and $\twiceTwoNoiseOnePreciseBetaTwo$ and the other by discount factors $\twiceTwoNoiseTwoPreciseBetaOne$ and $\twiceTwoNoiseTwoPreciseBetaTwo$. The data do not clearly reveal that the point-identified true discount factor is $\twiceTwoOnePreciseBetaTwo$. If anything, the data suggest point identification in the lower region. Even if point identification cannot be determined a priori without further assumptions, the discount factor is still set identified and we can use consistent set estimators. 

Following \cite{chernozhukov2007estimation} and \cite{romanoshaikh10}, suppose that the identified set ${\cal B}=\{\beta\in[0,1):S(\beta)=0\}$ for some population criterion function $S:[0,1)\rightarrow[0,\infty)$. Note that we can alternatively write ${\cal B} = \arg \min_{\beta \in [0,1)} S(\beta)$. This suggests that we estimate ${\cal B}$ by a random contour set ${\cal C}_n(s)= \{\beta \in [0,1): a_n S_n(\beta) \leq s\}$ for some level $s>0$ and normalizing sequence $\{a_n\}$, where $S_n(\beta)$ is the sample equivalent of $S(\beta)$ and $n$ is the sample size. For a given confidence level $\alpha\in(0,1)$, $s$ is set to equal a consistent estimator $s_n$ of the $\alpha$-quantile of $\sup_{\beta \in {\cal B}}  a_n S_n(\beta)$, so that the estimator ${\cal C}_n(s_n)$ asymptotically contains the identified set with probability $\alpha$:
\[
\lim_{n \rightarrow \infty} \Pr\{{\cal B} \subseteq {\cal C}_n(s_n)\} = \alpha. 
\]

The bottom panel of Figure \ref{fig:twiceTwoNoise} illustrates one such estimator. The criterion $S_n(\beta)$ is here a quadratic form in the difference between the left and right hand sides of \eqref{eq:diffAD} evaluated at consistent estimators of the choice and transition probabilities using equal weights. The critical value $s_n$ is given as the horizontal line. The estimated set is ${\cal C}_n(s_n) = [\twiceTwoNoisePreciseBetaSetOne, \twiceTwoNoisePreciseBetaSetTwo] \cup[\twiceTwoNoisePreciseBetaSetThree, \twiceTwoNoisePreciseBetaSetFour]$. The data are equally consistent with a range of small discount factors and a range of large discount factors, but an intermediate range $(\twiceTwoNoisePreciseBetaSetTwo, \twiceTwoNoisePreciseBetaSetThree)$ is rejected at the $\alpha$-level, along with discount factors smaller than $\twiceTwoNoisePreciseBetaSetOne$ and larger than $\twiceTwoNoisePreciseBetaSetFour$.

Under some regularity conditions, the set estimator converges to the identified set as the sample size grows. Since the identified set is a point in this example, in the limit, the subset of ${\cal C}_n(s_n)$ with small discount factors vanishes and its subset with large discount factors degenerates to the population discount factor $\twiceTwoOnePreciseBetaTwo$. While these set estimators are computationally demanding for parameter spaces with even just a handful of dimensions, they are easy to implement in a one-dimensional case such as ours. 
\end{example}

\section{Practical considerations}
We conclude with some considerations relevant to applications. If the discount factor is point identified, utilities are as well, and $\beta$ and $\textbf{u}$ can be estimated jointly by standard methods, e.g. maximum likelihood, with the exclusion restriction on $\textbf{u}$ imposed. Standard inference for extremum estimators applies \citep[e.g.][]{nh94:neweymcfadden}. 

Typical implementations of such joint estimators will impose functional form assumptions on the utility function that have identifying power on their own (\citeauthor{qe18:komarovaetal}). Then, it is unclear how much information about the discount factor is carried by the exclusion restrictions, which are economically motivated, and how much is carried by the functional forms, which are typically more arbitrary.  An alternative approach is to use that,  by a version of \citeauthor{ecta02:magnacthesmar}'s Proposition 2 (see Section \ref{s:empiricalcontent}), there exist unique utilities that rationalize the data for any given discount factor. This suggests a two-step estimation procedure. In the first step,  $\beta$ can be recovered from the moment condition in \eqref{eq:diffAD}. In the second step, the utilities are estimated using the moment conditions in \eqref{eq:u}, taking the discount factor recovered in the first step as given. This way, estimation of the discount factor is robust to misspecification of the utility function. See \citet{daljordetal19} for an application of this approach.   

If the discount factor is not known to be point identified, one may construct the sample analogues to \eqref{eq:diffAD} and plot the criterion function, as in the bottom panel of Figure \ref{fig:twiceTwoNoise}. If the criterion function is close to quadratic around a unique minimum on the domain of $\beta$, one may proceed as if the model is point identified. If the criterion function is decidedly nonquadratic, as in Figure \ref{fig:twiceTwoNoise}, then the discount factor can be estimated in a first step using a set estimator of the kind described in Section \ref{s:multiple}. These estimates are a set of possibly intersecting subintervals of $[0,1)$. In a second step, utilities and counterfactual choice probabilities can be computed for each $\beta$ in the identified set.

\clearpage
\appendix
\section*{Appendix}

\section*{Identification with general reference utility}
\label{app:uK}

Consider the stationary model of Section \ref{s:model}. Suppose that we know $\mathbf{u}_K$ up to a constant additive shift; that is, $\mathbf{u}_K=\gamma\mathbf{1}+\bar{\mathbf{u}}_K$,  with $\gamma\in\mathbb{R}$  unknown, $\mathbf{1}$ the $J\times 1$ vector of ones, and $\bar{\mathbf{u}}_K$ a known $J\times 1$ vector with $j$-th element $\bar{u}_K(x_j)$. Then, we can rewrite \eqref{eq:u} as
\begin{equation}
\label{eq:uuK}
\ln \mathbf{p}_k - \ln \mathbf{p}_K
=  \beta\left[\mathbf{Q}_k-\mathbf{Q}_K\right]\left[\mathbf{I}-\beta \mathbf{Q}_K\right]^{-1}\left(\mathbf{m}+\bar{\mathbf{u}}_K\right) + \mathbf{u}_k-\gamma\mathbf{1}-\bar{\mathbf{u}}_K.
\end{equation}
Note that the constant additive shift $\gamma\mathbf{1}$ drops from the first term, which is a difference in expectations under choices $k$ and $K$. 

Now suppose that $u_k^*(\tilde x_1)-u_l^*(\tilde x_2)$ is known, but not necessarily zero,  for some known choices $k\in{\cal D}/\{K\}$ and $l \in {\cal D}$, and known states $\tilde x_1 \in {\cal X}$ and $\tilde x_2\in{\cal X}$; with either $k\neq l$,  $\tilde{x}_1\not =\tilde{x}_2$, or both. This is an exclusion restriction that encompasses \eqref{eq:exclusionrestriction} in the main text as a special case. Under this generalized exclusion restriction, (\ref{eq:uuK}) implies
\begin{equation}
\label{eq:diffADGen}
\begin{split}
&\ln\left(p_k(\tilde x_1)/p_K(\tilde x_1)\right) - \ln\left(p_l(\tilde x_2)/p_K(\tilde x_2)\right)- \Delta^2 u\\
&~~~~~~~~~~~~~~~~=\beta\left[\mathbf{Q}_k(\tilde x_1)-\mathbf{Q}_K(\tilde x_1)- \mathbf{Q}_l(\tilde x_2)+\mathbf{Q}_K(\tilde x_2)\right]  \left[\mathbf{I}-\beta \mathbf{Q}_K\right]^{-1}\bar{\mathbf{m}},
\end{split}
\end{equation}
with $\Delta^2 u\equiv u_k^*(\tilde x_1)-u_l^*(\tilde x_2)-\bar{u}_K(\tilde x_1)+\bar{u}_K(\tilde x_2)$ and $\bar{\mathbf{m}}\equiv\mathbf{m}+\bar{\mathbf{u}}_K$ known. The factor multiplying $\beta$ in the right hand side of \eqref{eq:diffAD} can again be interpreted in terms of incentives related to differences in expected future utilities, which now include the known utilities derived from the reference choice $K$. Multiplying these ``incentives'' by the discount factor $\beta$ gives the log choice probability response, corrected for the known effects of the current utility contrast $\Delta^2 u$, in the left hand side of  \eqref{eq:diffAD}.     

The analysis of the main text applies to this generalization with straightforward adaptations. In particular, \eqref{eq:diffADGen} is a moment condition in only one unknown, the discount factor $\beta$, and can be taken directly to data. The following generalization of Theorem \ref{th:ident} can be proved like that theorem.
\begin{theorem}
\label{th:identGen}
Suppose that $u_k^*(\tilde x_1)-u_l^*(\tilde x_2)$ is known for some $k\in{\cal D}/\{K\}$, $l \in {\cal D}$, $\tilde x_1 \in {\cal X}$, and $\tilde x_2\in{\cal X}$; with either $k \neq l$,  $\tilde{x}_1\not =\tilde{x}_2$, or both. Moreover, suppose that either the left hand side of (\ref{eq:diffADGen}) is nonzero (that is, $p_k(\tilde x_1)/p_K(\tilde x_1)-p_l(\tilde x_2)/p_K(\tilde x_2)\neq\Delta^2 u$)
or a generalization of \citeauthor{ecta02:magnacthesmar}'s rank condition \eqref{eq:rankMT} holds:
\begin{equation*}
\left[\mathbf{Q}_k(\tilde x_1)-\mathbf{Q}_K(\tilde x_1)- \mathbf{Q}_l(\tilde x_2)+\mathbf{Q}_K(\tilde x_2)\right]\bar{\mathbf{m}}\not = 0.
\end{equation*} 
Then, the identified set ${\cal B}$ is a closed discrete subset of $[0,1)$.
\end{theorem}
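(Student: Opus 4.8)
The plan is to mirror the proof of Theorem \ref{th:ident} essentially verbatim, after observing that \eqref{eq:diffADGen} has exactly the same structure in $\beta$ as \eqref{eq:diffAD}: the known vector $\mathbf{m}$ is replaced by the equally known vector $\bar{\mathbf{m}}=\mathbf{m}+\bar{\mathbf{u}}_K$, and the left-hand side is shifted by the known scalar $\Delta^2 u$. Crucially, the matrix $[\mathbf{I}-\beta\mathbf{Q}_K]^{-1}$ and the coefficient $[\mathbf{Q}_k(\tilde x_1)-\mathbf{Q}_K(\tilde x_1)-\mathbf{Q}_l(\tilde x_2)+\mathbf{Q}_K(\tilde x_2)]$ are untouched, and the analysis of the former depends only on $\mathbf{Q}_K$ being a Markov transition matrix.

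First I would reuse the Neumann-series argument without change: because $\mathbf{Q}_K$ is a transition matrix with eigenvalues bounded by one in absolute value, $[\mathbf{I}-\beta\mathbf{Q}_K]^{-1}$ exists and equals $\mathbf{I}+\beta\mathbf{Q}_K+\beta^2\mathbf{Q}_K^2+\cdots$ for $\beta\in(-1,1)$. Substituting this expansion into \eqref{eq:diffADGen} and subtracting the left-hand side, the resulting object is a real-valued power series in $\beta$ convergent on $(-1,1)$; call the function it defines $f:(-1,1)\to\mathbb{R}$. By Corollary 1.2.4 of \citet{birk02:krantzparks}, $f$ is real analytic, and $\mathcal{B}=\{\beta\in[0,1):f(\beta)=0\}$ is the intersection of its zero set with $[0,1)$.

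The only step requiring a fresh check is that $f$ is nonconstant under the hypotheses, and this is where the two alternative conditions enter. Since the factor $\beta$ multiplies the entire right-hand side of \eqref{eq:diffADGen}, that right-hand side vanishes at $\beta=0$, so $f(0)$ equals minus the known left-hand side; hence if the left-hand side is nonzero, $f(0)\neq0$ and $f$ is nonconstant. Otherwise, differentiating the right-hand side at $\beta=0$ annihilates every term of order two or higher and leaves exactly $[\mathbf{Q}_k(\tilde x_1)-\mathbf{Q}_K(\tilde x_1)-\mathbf{Q}_l(\tilde x_2)+\mathbf{Q}_K(\tilde x_2)]\bar{\mathbf{m}}$, which the generalized rank condition takes to be nonzero, so $f'(0)\neq0$ and $f$ is again nonconstant. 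A nonconstant real-analytic function has no limit point in its zero set (Corollary 1.2.7 of \citeauthor{birk02:krantzparks}), whence $\mathcal{B}$ has no limit point in $[0,1)$ and is a closed discrete subset, as claimed.

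I expect no substantive obstacle here: the generalization only substitutes known quantities ($\mathbf{m}\mapsto\bar{\mathbf{m}}$ and an additive constant $\Delta^2 u$ on the left) and leaves the $\beta$-dependent architecture intact. The one point I would state carefully is that $\bar{\mathbf{m}}$ is a fixed, data-determined vector, so the power-series coefficients $[\mathbf{Q}_k(\tilde x_1)-\mathbf{Q}_K(\tilde x_1)-\mathbf{Q}_l(\tilde x_2)+\mathbf{Q}_K(\tilde x_2)]\mathbf{Q}_K^r\bar{\mathbf{m}}$ are well-defined real numbers for each $r$; with that noted, the convergence, analyticity, and isolated-zero arguments transfer unchanged.
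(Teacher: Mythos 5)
Your proposal is correct and is exactly what the paper intends: the paper offers no separate proof, stating only that Theorem \ref{th:identGen} ``can be proved like'' Theorem \ref{th:ident}, and your argument carries out that adaptation faithfully---replacing $\mathbf{m}$ by $\bar{\mathbf{m}}$, absorbing the known shift $\Delta^2 u$ into the left hand side, and leaving the Neumann-series, real-analyticity, and isolated-zeros steps untouched. Your direct verification of nonconstancy via $f(0)\neq 0$ or $f'(0)\neq 0$ even streamlines the paper's case split on whether the zero set is empty, without changing the substance.
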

A version of Corollary \ref{cor:finite} follows directly and so do the simplifications that arise from finite dependence, in particular those that arise in renewal and optimal stopping problems.  Finally, it is easy to adapt the analysis in this appendix to the nonstationary case. We will not pursue that here. 

This appendix (in particular, a comparison of moment conditions \eqref{eq:diffAD} and \eqref{eq:diffADGen}) demonstrates that the analysis in the main text extends
\begin{itemize} 
\item without change to the case in which $u^*_K(x)$ equals a (not necessarily zero or even known) constant;
\item with a simple, known adjustment to the choice probability response in the left hand side of \eqref{eq:diffAD} to the case that $u_k^*(\tilde x_1)-u_l^*(\tilde x_2)$ is known, but not necessarily zero; and
\item with another such adjustment to the  left hand side of \eqref{eq:diffAD} {\em and} a known adjustment to the polynomial in the right hand side of \eqref{eq:diffAD} if $u^*_K$ is only known up to a constant additive shift, but not necessarily constant. 
\end{itemize}
This shows that our analysis can directly be applied to problems in which a state independent reference utility exists (as is typically assumed in applied work) and directly complements results on the identification of more general reference utility specifications.

\clearpage

\pdfbookmark[0]{References}{pdfbm:refs}
\bibliographystyle{chicago}
\bibliography{alljaap2}



\end{document}